\def\nno{\nonumber}
\newtheorem{theorem}{\hspace{6mm}Theorem}[section]
\newtheorem{remark}{\hspace{6mm}Remark}[section]
\newtheorem{example}{\textsc{Example}}[section]
\newtheorem{proposition}{\hspace{6mm}Proposition}[section]
\def\cblue{\textcolor{blue}}
\def\cpurple{\textcolor{purple}}
\def\bs{\boldsymbol}
\def\nno{\nonumber}
\newcommand{\shuffle}{\mathbin{\text{\scriptsize $\sqcup\!\sqcup$}}}
\title{Option pricing under non-Markovian stochastic volatility models: A deep signature approach}
\author{Jingtang Ma\thanks{\rm School of Mathematics, Big Data Laboratory on Financial Security and Behavior (Laboratory of Philosophy and Social Sciences, Ministry of Education),
Southwestern University of Finance and Economics, Chengdu, 611130, PR China. Email: mjt@swufe.edu.cn}  
\and Xianglin Wu\thanks{\rm School of Mathematics, Southwestern University of Finance and Economics, Chengdu, 611130, PR China. Email: 1210202Z1003@smail.swufe.edu.cn}
\and Wenyuan Li\thanks{\rm Corresponding author. Department of Statistics and Actuarial Science, The University of Hong Kong, Pokfulam, Hong Kong, PR China.  Email: wylsaas@hku.hk.} }
\begin{document}

\maketitle

\begin{abstract}
This paper studies the option pricing problem in which the underlying asset follows a non-Markovian stochastic volatility model. Classical partial differential equation methods face significant challenges in this context, as the option prices depend not only on the current state, but also on the entire historical path of the process. To overcome these difficulties, we reformulate the asset dynamics as a rough stochastic differential equation and then represent the rough paths via linear or non-linear combinations of time-extended Brownian motion signatures. \cpurple{This representation transforms a rough stochastic differential equation to a classical stochastic differential equation, allowing us to characterize the value function as a random field and apply the It\^o-Wentzell formula to derive the corresponding pricing PDE. }We propose a deep signature approach for both linear and nonlinear representations and rigorously prove the convergence of the algorithm. Numerical examples demonstrate the effectiveness of our approach for both Markovian and non-Markovian volatility models, offering a theoretically grounded and computationally efficient framework for option pricing.
\end{abstract}
	
	\section{Introduction}
	The pricing problem lies at the heart of mathematical finance, which is equivalent to computing an expectation such as
	\begin{equation}
	\label{eq-intro-pricing}
	\mathbb{E}[\Phi(S_T)],
	\end{equation}
	where $\Phi$ is a sufficiently regular payoff function and $T$ denotes maturity. A typical choice of the underlying asset $S$ is the stochastic volatility model as follows
	\begin{equation}
	\label{eq-intro-asset}
	dS_t=f(t,S_t)v_tdW_t+g(t,S_t)v_tdB_t, \quad S_0=s_0, \; 0< t\leq T,
	\end{equation}
	where $(f,\,g)$ is a pair of deterministic functions and $W$ and $B$ are independent Brownian motions in a complete filtered probability space $(\Omega,\mathcal{F}_t,P)$. Here, the randomness in \eqref{eq-intro-asset} only arises from the two independent Brownian motions, implying that the filtration $\mathcal{F}_t$ is generated by $\mathcal{F}^W_t$ and $\mathcal{F}^B_t$, the filtration induced by $W$ and $B$, respectively, i.e. $\mathcal{F}_t=\mathcal{F}^W_t\lor\mathcal{F}^B_t$.
	As for the volatility process $v_t:0\leq t\leq T$, we assume that it is $\mathcal{F}^W_t$-progressive with bounded sample paths throughout this work. This assumption is very general, as it includes nearly all classical stochastic volatility processes in use, as well as the non-Markovian cases, such as rough volatility processes (see Table~\ref{tab-example-vol-processes}).
	The formulation of the volatility process $v$ plays an important role in pricing applications. Specifically, classical models, which are typically Markovian, allow the Feynman-Kac theorem to elegantly connect the pricing problem \eqref{eq-intro-pricing} with the corresponding partial differential equations (PDEs).
	However, this framework faces significant challenges when applied to more complex volatility processes that lack the Markovian and semimartingale structure.
	To overcome these difficulties, we integrate rough path theory with signature transforms to construct a Markovian representation equivalent to~\eqref{eq-intro-asset}, thereby enabling the application of classical mathematical finance tools.
	
	Specifically, we first reformulate the dynamic~\eqref{eq-intro-asset} to a rough stochastic differential equation (RSDE) with the It\^o rough path lift (see \cite{lyons2007differential} and \cite{bank2025rough}) as follows
	\begin{equation}
		\label{eq-intro-RSDE}
        dX_t=f(t,X_t)d\mathbf{I}_t+g(t,X_t)\mathbf{v}_tdB_t,\quad X_0=s_0,\; 0<t\leq T,
	\end{equation}
	where $\mathbf{I}_t=(I_t,\mathbb{I}_t)$ is the It\^o rough path lift of the fixed realization integrator $I_t:=\int_0^tv_sdW_s$ and $\mathbf{v}_t$ is defined by $\mathbf{I}_t$. 
	\cpurple{
	The deterministic nature of the rough pair $(\mathbf{v},\mathbf{I})$ implies that the RSDE~\eqref{eq-intro-RSDE} admits a pathwise solution.  
	Next, we express the rough pair $(\mathbf{v},\mathbf{I})$ in~\eqref{eq-intro-RSDE} as infinite linear or nonlinear combinations of the time-extended Brownian motion's path signature $(\widehat{\mathbb{W}}_t^{\infty})_{t\in[0, T]}$. 
	These signature representations allow us to rewrite the RSDE~\eqref{eq-intro-RSDE} as a classical, Markovian stochastic differential equation (SDE) driven by $B$ (see this Markovization technique in \cite{lyons2007differential} and \cite{bank2025rough}).}
	For practical purposes, we further truncate the infinite signature representation to the finite case and prove the convergence of the truncation representation. Lastly, we provide extensive examples to demonstrate the algorithm's effectiveness and develop a deep signature method that is easily extended to general volatility processes.
	
	The non-Markovian stochastic volatility model lies in the field of rough path theory. \cite{lyons1998differential} first develops the theoretical foundations for the deterministic rough differential equations (RDEs). \cite{diehl2017stochastic} extend the rough path framework to stochastic partial differential equations (SPDEs) with probabilistic representations. \cite{friz2021rough} introduce the RSDEs to provide a unified approach to SDEs and RDEs and prove the existence and uniqueness of its solution.  For more recent work, we refer to \cite{friz2024controlled}, \cite{bugini2024parameter}, \cite{bugini2024malliavin}, and \cite{li2024reflected}. In financial applications, \cite{bank2025rough} pioneered the use of rough partial differential equations (RPDEs) for option pricing by representing rough paths as weakly geometric ones. These paths were then approximated by  Lipschitz continuous paths, although the approximation error remains unquantified. 
	
	In this paper, we use a signature transform to study the non-Markovian volatility path. It is a convolution transform capturing the path information up to the current time (see \cite{chen1957integration}). By its universal linearization property together with machine learning techniques, the signature transform enables us to extract realistic path features \cpurple{from complex} volatility models (see, e.g.,~\cite{jaber2025hedging},~\cite{kidger2019deep},~\cite{tong2023sigformer} and~\cite{bayer2025pricing}). While these methods are easy to use, they lack mathematical tractability, precluding the use of classical mathematical finance tools. \cite{cuchiero2023signature} rigorously exploit the polynomial nature of generic primary processes and derive a tractable option pricing formula. \cite{jaber2024path} and~\cite{abi2025signature} further provide signature linear representations for linear path-dependent models with analytical coefficients.
	
	Our transformation of the dynamic process~\eqref{eq-intro-asset} into an RSDE is inspired by the state-of-the-art work \cite{bank2025rough}. 
    \cpurple{Building on this, we further characterize the value function of the European option as a random field. 
    Applying the It\^o-Wentzell formula then yields the classical pricing PDE. }
    However, our approach differs significantly in the treatment of rough paths within the RSDE framework. Specifically, we represent the rough path exactly through an infinite series of signatures, in contrast to their approach, which approximates the rough path as a weakly geometric rough path with Lipschitz continuous approximations. Our method eliminates the need for such approximations, offering a more precise characterization of the rough path dynamics. In addition, the linear signature representation technique was initially introduced in~\cite{abi2025signature} and~\cite{jaber2024path}, but the analytical forms are limited to linear path-dependent models, such as linear Volterra equations, linear delay equations, and Gaussian Volterra processes. Extending this approach to non-linear path-dependent models, such as the rough Heston (rHeston) model, remains an open challenge. To address this, we propose two novel \cpurple{numerical} approaches: (i) a deep linear signature approach, which employs neural networks to learn the time-dependent coefficients of a linear signature representation, and (ii) a deep nonlinear signature approach, which directly uses the signature as input to a neural network to capture non-linear dynamics. Numerical experiments demonstrate that the deep nonlinear signature approach significantly outperforms the linear method in terms of accuracy. Furthermore, we provide rigorous convergence proofs for both methods.
	
	\cpurple{
	Our paper contributes to the existing literature by incorporating the signature transform into the RSDE.
	This combined approach converts a non-Markovian system into an equivalent Markovian framework that can be solved pathwise.
    Within this framework, we can naturally characterize the value function as a random field and apply the It\^o-Wentzell formula, ultimately deriving the classical pricing PDE for European options.
	}
	Our contributions can be summarized in three aspects:
	\begin{itemize}
		\item This paper extends classical PDE tools, such as the Feynman-Kac theorem, to address option pricing problems in non-Markovian frameworks by leveraging \cpurple{signature transforms, rough path lifts, and the It\^o-Wentzell formula}.
		This approach bridges the gap between traditional stochastic calculus and path-dependent settings, enabling the analysis of complex financial models where standard Markovian assumptions fail.
		\item We propose two innovative neural network-based approaches for volatility process approximation: a deep linear signature approach and a deep nonlinear signature approach. These approaches enrich the literature by solving the general non-linear path-dependent models.
		\item We provide a comprehensive convergence analysis that incorporates truncation errors of the signature, approximation errors of the signature representation, and the overall pricing error. This demonstrates the reliability of our method.
	\end{itemize}
	
	The rest of the paper is organized as follows. In Section~\ref{sec-problem}, we formulate the problem by first converting the underlying asset price process into an equivalent RSDE, then connecting it to a classical SDE through signature representations. 
	\cpurple{
	Finally, the corresponding pricing PDE is derived using the It\^o-Wentzell formula. 
	}
	Rigorous convergence proofs are provided for both linear and non-linear signature representations. 
	In Section~\ref{sec-examples}, we compute the approximation errors of signature representation and European option pricing errors for both Markovian and non-Markovian models. Section~\ref{sec-conclusions} concludes.
	
	\section{Representing RSDE with finite linear and nonlinear combinations of signatures}
	\label{sec-problem}
		Let $(W_t)_{t\in[0,T]}$ and $(B_t)_{t\in[0,T]}$ be two independent standard Brownian motions.
		Consider an $(\mathcal{F}_t^W)$-progressive volatility process $(v_t)_{t\in[0,T]}$ with bounded sample paths, and define the associated integral process $I_t:=\int_0^tv_sdW_s,\,t\in[0,T]$.
		We are interested in the following model
		\begin{equation}
		\label{eq-sv}
			dS_t=f(t,S_t)dI_t+g(t,S_t)v_tdB_t,\quad S_0=s_0,\;t\in[0,T],
		\end{equation}
		\cpurple{where $f,g\in C^{0,3}([0,T]\times\mathbb{R})$ are locally Lipschitz continuous and satisfy the global linear growth conditions on $[0,T]$,} and $s_0\in\mathbb{R}^+$ is the \cpurple{finite} initial value at $t=0$.
		The SDE~\eqref{eq-sv} is very general, as its volatility process $v$ encompasses all $(\mathcal{F}_t^W)$-progressive and continuous specifications, including the Ornstein-Uhlenbeck (OU) process, the mean-reverting geometric Brownian motion (mGBM), and the recent rough volatility variants (see, e.g., Table~\ref{tab-example-vol-processes}).
		
		Let $(S_t)_{t\in[0,T]}$ denote the underlying asset price process, and we consider the European option pricing problem as follows
		\begin{equation}
			\label{eq-pricing}
			\cpurple{\mathbb{E}\left[\Phi\left(S_T\right)\right],}
		\end{equation}
		where $\Phi$ is a Lipschitz continuous payoff function and $T$ denotes the maturity.
		In this paper, we focus on solving the pricing problem when the pair $(v,I)$ follows non-Markovian dynamics. In a classical Markovian setting, one can employ the Feynman-Kac theorem to establish an equivalence between the option pricing problem~\eqref{eq-pricing} and the solution to the PDE associated with~\eqref{eq-sv}.
		However, the non-Markovian nature, arising from the pair $(v,I)$, invalidates the well-posedness of the PDE solution derived via the Feynman-Kac theorem.
		To establish the Markovian property for the SDE~\eqref{eq-sv}, we begin by introducing the It\^o rough path lift of the martingale $I$ and its bracket as follows
		\[
		\mathbf{I}_t=(I_t,\mathbb{I}_t)=\left(I_t,\iint\limits_{0<s_1<s_2<t}dI_{s_1}dI_{s_2}\right),\quad
			[\mathbf{I}]_t=[I]_t=\int_0^tv^2_sds,\quad t\in[0,T].
		\]
		Moreover, we give the definition of $\mathbf{v}$ for almost every $t\in[0,T]$ as follows
		\begin{equation*}
			\left([\mathbf{I}]_t,\frac{d[\mathbf{I}]_t}{dt},\sqrt{\frac{d[\mathbf{I}]_t}{dt}}\right):=\left(\int_0^t\mathbf{v}_s^2ds,\mathbf{v}_t^2,|\mathbf{v}_t|\right)=\left(\int_0^tv_s^2ds,v^2_t,|v_t|\right).
		\end{equation*}
		{\color{purple}
		Following the existing literature \cite{bank2025rough}, we treat the rough path lift pair $(\mathbf{v},\mathbf{I})$ as fixed and deterministic. Crucially, this pathwise perspective motivates us to use the rough path framework: it allows us to cast the non-Markovian component of the underlying asset price process \eqref{eq-sv} into a deterministic setting, thereby bypassing the probabilistic analytical difficulties that typically arise from its non-Markovian property. 

		Replacing the pair $(v,I)$ in the SDE~\eqref{eq-sv} with $(\mathbf{v},\mathbf{I})$ yields the RSDE as follows
		\begin{equation}
			\label{eq-RSDE}	
			dX_t=f(t,X_t)d\mathbf{I}_t+g(t,X_t)\mathbf{v}_tdB_t,\quad X_0=s_0,\; t\in[0,T].
		\end{equation}
		For any fixed realization $\omega$ of $W$, the solution $S(\omega)$ to the SDE~\eqref{eq-sv} conditional on $\mathcal{F}^W_t$ and the solution $X(\omega)$ to the RSDE~\eqref{eq-RSDE} have the same distribution a.s. (see  Theorem 3.2 in~\cite{bank2025rough}).
		That is, for $t\in[0,T]$, we have
		\begin{equation*}
			\text{Law}(S_t|\mathcal{F}^W_t)(\omega)
			=\text{Law}(X_t)|_{\mathbf{I}_t=(I_t,\mathbb{I}_t)(\omega)},\quad \text{a.s. }\omega\in\Omega,
		\end{equation*}
		where the ``Law'' means the distribution. 
        In this case, $X(\omega)$ is a time-inhomogeneous Markovian process since it can be regarded as $S(\omega)$ conditional on $\mathcal{F}^W_t$ in a distributional sense. Specifically, non-Markovian pair $(v_t,I_t)$ is $\mathcal{F}^W_t$ measurable and $S_t$ is Markovian with $\mathcal{F}^B_t$, thus $S_t|\mathcal{F}^W_t$ is Markovian as all information of $(v_t,I_t), t \in [0,T]$ is given. As a consequence, we degenerate the non-Markovian pricing problem to a Markovian problem, and the option price~\eqref{eq-pricing} can be computed via the RSDE~\eqref{eq-RSDE} as
        \begin{equation*}
            \mathbb{E}\big[\Phi(S_T)\big] = \mathbb{E}\big[\mathbb{E}\big[\Phi(S_T)\big|\mathcal{F}^W_T\big]\big]
            =\mathbb{E}\big[\mathbb{E}\big[\Phi(X_T)\big|\mathbf{I}_T\big]\big]=\mathbb{E}\big[\Phi(X_T)\big].
        \end{equation*}}
		
        \subsection{Framework for linear signature representation}
		This section reformulates the RSDE\\
        \eqref{eq-RSDE} into a more tractable form to compute $\mathbb{E}\left[\Phi\left(X_T\right)\right]$.
		To this end, we introduce the path signature framework.
		Specifically, let
		$$\widehat{\mathbb{W}}_t^{\infty}(\omega) :=\left(\widehat{\mathbb{W}}_t^{(0)}(\omega),\widehat{\mathbb{W}}_t^{(1)}(\omega), \ldots,\widehat{\mathbb{W}}_t^{(n)}(\omega),\ldots\right),\quad t\in[0,T]$$
		denote the path signature of the time-extended Brownian motion $\widehat{W}_t(\omega):=(t,W_t(\omega)),\,t\in[0,T]$ in the Stratonovich sense (see \cite{cuchiero2023signature} for details). 
		{\color{purple}
		Note that $\widehat{\mathbb{W}}_t^{\infty}(\omega)$ and $W_t(\omega)$ share the same sample path $\omega$, and we will omit the explicit dependence on $\omega$ in what follows for notional simplicity. 
		We assume that the stochastic pair $(v,I)$ can be equivalently rewritten as infinite linear combinations of stochastic process $\widehat{\mathbb{W}}^{\infty}$ with the time-dependent and smooth coefficients $\bs{\ell}_t:=(\bs{\ell}^{(0)}_t,\bs{\ell}^{(1)}_t,\ldots,\bs{\ell}^{(n)}_t,\ldots)$ and $\bs{p}_t:=(\bs{p}^{(0)}_t,\bs{p}^{(1)}_t,\ldots,\bs{p}^{(n)}_t,\ldots)$, i.e., $v_t=\langle\bs{\ell}_t,\widehat{\mathbb{W}}_t^{\infty}\rangle$ and $I_t=\langle\bs{p}_t,\widehat{\mathbb{W}}_t^{\infty}\rangle$ for $t\in[0,T]$ (see~\cite{abi2025signature} and~\cite{jaber2024path} for details). 
		Moreover, when evaluated with a fixed sample $\omega$, the It\^o rough path lift $(\mathbf{v},\mathbf{I})$ is entirely pathwise and deterministic. 
		Therefore, its signature representation with the same coefficients $(\bs{\ell}_t,\bs{p}_t)$ is also pathwise and deterministic, corresponding directly to that fixed sample $\omega$. 
		}
		The time-dependent coefficients \cpurple{belong to} an extended tensor algebra defined within the admissible set $\mathcal{A}(\widehat{\mathbb{W}}^{\infty})$, which ensures that the infinite series remains well-defined:
		\cpurple{
		\begin{equation*}
			\mathcal{A}(\widehat{\mathbb{W}}^{\infty}):=\left\{\cdot\;:\sum\limits_{n=0}^{\infty}\left|\langle\cdot,\widehat{\mathbb{W}}_t^{(n)}\rangle\right|<\infty\text{ for all }t\in[0,T]\text{ a.s.}\right\}.
		\end{equation*}
		Applying the linear signature representations above, the RSDE~\eqref{eq-RSDE} can be rewritten as follows
		\begin{flalign}
			\label{eq-RSDE-rep-t}
			dX_t
			=&f(t,X_t)d\mathbf{I}_t+g(t,X_t)\mathbf{v}_tdB_t\\
			=&f(t,X_t)d\langle\bs{p}_t,\widehat{\mathbb{W}}_t^{\infty}\rangle+g(t,X_t)\langle\bs{\ell}_t,\widehat{\mathbb{W}}_t^{\infty}\rangle dB_t\nno\\
			=&f(t,X_t)\big(\langle\dot{\bs{p}}_t,\widehat{\mathbb{W}}_t^{\infty}\rangle dt+\langle\bs{p}_t,d\widehat{\mathbb{W}}_t^{\infty}\rangle\big)+g(t,X_t)\langle\bs{\ell}_t,\widehat{\mathbb{W}}_t^{\infty}\rangle dB_t\nno\\
			=&f(t,X_t)\big[\langle\dot{\bs{p}}_t,\widehat{\mathbb{W}}_t^{\infty}\rangle dt+\langle\bs{p}_t,\widehat{\mathbb{W}}_t^{\infty}\otimes(e_1+\frac{1}{2}e_2\otimes e_2)\rangle dt+\langle\bs{p}_t,\widehat{\mathbb{W}}_t^{\infty}\otimes e_2\rangle dW_t\big]\nno\\
			&+g(t,X_t)\langle\bs{\ell}_t,\widehat{\mathbb{W}}_t^{\infty}\rangle dB_t\nno\\
			=&\langle f(t,X_t)(\dot{\bs{p}}_t+\mathcal{D}_1\bs{p}_t+\frac{1}{2}\mathcal{D}_{22}^2\bs{p}_t),\widehat{\mathbb{W}}_t^{\infty}\rangle dt+\langle f(t,X_t)\mathcal{D}_2\bs{p}_t,\widehat{\mathbb{W}}_t^{\infty}\rangle dW_t\nno\\
			&+\langle g(t,X_t)\bs{\ell}_t,\widehat{\mathbb{W}}_t^{\infty}\rangle dB_t\nno\\
            =&\langle\bs{q}^f_t,\widehat{\mathbb{W}}_t^{\infty}\rangle dt+\langle f(t,X_t)\mathcal{D}_2\bs{p}_t,\widehat{\mathbb{W}}_t^{\infty}\rangle dW_t+\langle g(t,X_t)\bs{\ell}_t,\widehat{\mathbb{W}}_t^{\infty}\rangle dB_t,\quad t\in[0,T],\nno
		\end{flalign}
		where $\{e_1, e_2\}$ denotes the canonical basis of $\mathbb{R}^2$, and $\mathcal{D}$ acts as the adjoint operator of right-concatenation\footnote{\cpurple{Specifically, it is defined such that $\langle\mathcal{D}_i\bs{p}_t,\widehat{\mathbb{W}}_t^{\infty}\rangle:=\langle\bs{p}_t,\widehat{\mathbb{W}}_t^{\infty}\otimes e_i\rangle$ and $\langle\mathcal{D}_{ij}^2\bs{p}_t,\widehat{\mathbb{W}}_t^{\infty}\rangle:=\langle\bs{p}_t,\widehat{\mathbb{W}}_t^{\infty}\otimes e_i\otimes e_j\rangle$ for $i,j\in\{1,2\}$. }}~in the dual space of the signature algebra (see Section 5.1, \cite{lemercier2024log}). 
		The coefficient $\bs{q}^f_t$ reads as
		\begin{equation*}
			\label{eq-coefficient-q-t}
			\bs{q}^f_t:=f(t,X_t)(\dot{\bs{p}}_t+\mathcal{D}_1\bs{p}_t+\frac{1}{2}\mathcal{D}_{22}^2\bs{p}_t).
		\end{equation*}
		}
		
		\cpurple{
		Notably, we can derive numerically equivalent time-independent coefficients $\bs{\ell}$ and $\bs{p}$ for the linear signature representations of specific $(\mathbf{v},\mathbf{I})$ through Taylor expansion (see~\cite{abi2025signature}).
		Building on this result, we present the Proposition~\ref{proposition-time-independent} to reconstruct the RSDE~\eqref{eq-RSDE} using the time-independent signature representation coefficients $\bs{\ell}$ and $\bs{p}$, namely $\mathbf{v}_t=\langle\bs{\ell},\widehat{\mathbb{W}}_t^{\infty}\rangle$ and $\mathbf{I}_t=\langle\bs{p},\widehat{\mathbb{W}}_t^{\infty}\rangle$ for $t\in[0,T]$.
		}
		\begin{proposition}[Linear signature representation with time-independent coefficients]
			\label{proposition-time-independent}
			\cpurple{
			Let $\widehat{W}_t:=(t,W_t)$ be a time extended one-dimensional Brownian motion, and $\widehat{\mathbb{W}}^{\infty}$ be the infinite sequence of its path signature.
			Suppose the pair $(\mathbf{v}_t, \mathbf{I}_t)$ can be expressed as $\mathbf{v}_t=\langle\bs{\ell},\widehat{\mathbb{W}}_t^{\infty}\rangle$ and $\mathbf{I}_t=\langle\bs{p},\widehat{\mathbb{W}}_t^{\infty}\rangle$, where $\bs{\ell}= (\bs{\ell}^{(0)},\bs{\ell}^{(1)},\bs{\ell}^{(2)},\ldots)$ and $\bs{p}= (\bs{p}^{(0)},\bs{p}^{(1)},\bs{p}^{(2)},\ldots)$ are time-independent tensors in the dual space of the signature algebra that also belong to the admissible set $\mathcal{A}(\widehat{\mathbb{W}}^{\infty})$. 
			Then the RSDE~\eqref{eq-RSDE} can be represented as follows
			\begin{equation*}
				\label{eq-RSDE-rep}
				d\widetilde{X}_t
				=\langle\widetilde{\bs{q}}^f_t,\widehat{\mathbb{W}}_t^{\infty}\rangle dt+\langle f(t,X_t)\mathcal{D}_2\bs{p},\widehat{\mathbb{W}}_t^{\infty}\rangle dW_t+\langle g(t,X_t)\bs{\ell},\widehat{\mathbb{W}}_t^{\infty}\rangle dB_t,\quad t\in[0,T],
			\end{equation*}
			where 		
			\begin{equation*}
				\widetilde{\bs{q}}^f_t:=f(t,X_t)(\mathcal{D}_1\bs{p}+\frac{1}{2}\mathcal{D}_{22}^2\bs{p}).
			\end{equation*}
			}
		\end{proposition}
        \begin{proof}
        	\cpurple{
            The result follows directly from Equation \eqref{eq-RSDE-rep-t}.
        	}
        \end{proof}
		
		{\color{purple}
		Since the non-Markovian component of the underlying asset price process has been replaced by deterministic rough paths and subsequently by signature representations, the process $X(\omega)$ becomes a time-inhomogeneous Markov process. 
		Consequently, when characterizing the European option price $\mathbb{E}\big[\Phi(X_T)\mid\widehat{\mathbb{W}}^{\infty}_T,X_t=x\big]$, the associated value function $u(t,x)$ inherently depends on a path $\omega$. 
		Assume that $u(t,x)$ is given as follows:
		\begin{equation}
			\label{eq-u-field}
			du(t,x)=\Psi_1(t,x)dt+\Psi_2(t,x)dW_t,\quad t\in[0,T],
		\end{equation}
		where $\Psi_1(t,x)$ and $\Psi_2(t,x)$ are $\mathcal{F}_t^W$-predictable and satisfy standard integrability conditions.  
		Then, using the It\^o-Wentzell formula yields
		\begin{align*}
			du(t,X_t)
			=&\Psi_1(t,X_t)dt+\Psi_2(t,X_t)dW_t+\partial_x\Psi_2(t,X_t)d\langle W,X\rangle_t\\
			&+\partial_xu(t,X_t)dX_t+\frac{1}{2}\partial^2_{xx}u(t,X_t)d\langle X,X\rangle_t\\
			=&\Psi_1(t,X_t)dt+\Psi_2(t,X_t)dW_t+\partial_x\Psi_2(t,X_t)\langle f(t,X_t)\mathcal{D}_2\bs{p}_t,\widehat{\mathbb{W}}_t^{\infty}\rangle dt\\
			&+\partial_xu(t,X_t)\big(\langle\bs{q}^f_t,\widehat{\mathbb{W}}_t^{\infty}\rangle dt+\langle f(t,X_t)\mathcal{D}_2\bs{p}_t,\widehat{\mathbb{W}}_t^{\infty}\rangle dW_t+\langle g(t,X_t)\bs{\ell}_t,\widehat{\mathbb{W}}_t^{\infty}\rangle dB_t\big)\\
			&+\frac{1}{2}\partial^2_{xx}u(t,X_t)\langle f^2(t,X_t)(\mathcal{D}_2\bs{p}_t\shuffle\mathcal{D}_2\bs{p}_t)+g^2(t,X_t)(\bs{\ell}_t\shuffle\bs{\ell}_t),\widehat{\mathbb{W}}_t^{\infty}\rangle dt\\
			=&\Big[\Psi_1(t,X_t)+\partial_x\Psi_2(t,X_t)\langle f(t,X_t)\mathcal{D}_2\bs{p}_t,\widehat{\mathbb{W}}_t^{\infty}\rangle+\partial_xu(t,X_t)\langle\bs{q}_t^f,\widehat{\mathbb{W}}_t^{\infty}\rangle\\
			&+\frac{1}{2}\partial^2_{xx}u(t,X_t)\langle f^2(t,X_t)(\mathcal{D}_2\bs{p}_t\shuffle\mathcal{D}_2\bs{p}_t)+g^2(t,X_t)(\bs{\ell}_t\shuffle\bs{\ell}_t),\widehat{\mathbb{W}}_t^{\infty}\rangle\Big]dt\\
			&+\Big[\Psi_2(t,X_t)+\partial_xu(t,X_t)\langle f(t,X_t)\mathcal{D}_2\bs{p}_t,\widehat{\mathbb{W}}_t^{\infty}\rangle\Big]dW_t\\
			&+\partial_xu(t,X_t)\langle g(t,X_t)\bs{\ell}_t,\widehat{\mathbb{W}}_t^{\infty}\rangle dB_t, 
		\end{align*}
		where $\shuffle$ denotes the shuffle product\footnote{The shuffle product is commutative and associative, generating all possible interleavings of input tensors while preserving the relative order of indices from each original tensor. } for tensors (see, e.g.,~\cite{abi2025signature}, Section 2.1). 
		Since the value function $u(t,x)$ is defined as a conditional expectation, it constitutes a martingale with respect to $B$ by Doob's martingale property. 
		Therefore, by setting the coefficients of the $dW$ and $dt$ terms to $0$, we obtain
		\begin{align*}
			\Psi_1(t,X_t)=
			&-\partial_x\Psi_2(t,X_t)\langle f(t,X_t)\mathcal{D}_2\bs{p}_t,\widehat{\mathbb{W}}_t^{\infty}\rangle-\partial_xu(t,X_t)\langle\bs{q}^f_t,\widehat{\mathbb{W}}_t^{\infty}\rangle\\
			&-\frac{1}{2}\partial^2_{xx}u(t,X_t)\langle f^2(t,X_t)(\mathcal{D}_2\bs{p}_t\shuffle\mathcal{D}_2\bs{p}_t)+g^2(t,X_t)(\bs{\ell}_t\shuffle\bs{\ell}_t),\widehat{\mathbb{W}}_t^{\infty}\rangle,\\
			\Psi_2(t,X_t)=
			&-\partial_xu(t,X_t)\langle f(t,X_t)\mathcal{D}_2\bs{p}_t,\widehat{\mathbb{W}}_t^{\infty}\rangle.
		\end{align*}
		Substituting the coefficients $\Psi_1(t,X_t)$ and $\Psi_2(t,X_t)$ back into the random field~\eqref{eq-u-field} yields the following PDE with fixed and deterministic path $\omega$: 
		\begin{equation}
			\label{eq-RPDE-rep-t}
			\begin{cases}
				-du=&\big[\partial_xu\langle\bs{q}_t^f-f_0(t,x)(\mathcal{D}_2\bs{p}_t\shuffle\mathcal{D}_2\bs{p}_t),\widehat{\mathbb{W}}_t^{\infty}\rangle\\
				&+\frac{1}{2}\partial_{xx}^2u\langle g^2(t,x)(\bs{\ell}_t\shuffle\bs{\ell}_t)-f^2(t,x)(\mathcal{D}_2\bs{p}_t\shuffle\mathcal{D}_2\bs{p}_t),\widehat{\mathbb{W}}_t^{\infty}\rangle\big]dt\\
				&+\partial_xu\langle f(t,x)\mathcal{D}_2\bs{p}_t,\widehat{\mathbb{W}}_t^{\infty}\rangle dW_t,\\
				u(T,x)=&\Phi(x),
			\end{cases}
		\end{equation}
		where $f_0(t,x):=\partial_xf(t,x)f(t,x)$. 
		The PDE~\eqref{eq-RPDE-rep-t} is understood in a pathwise sense, meaning that $dW_t$ here is a fixed realization. 
		From the Feynman-Kac theorem and tower property, we know that any bounded solution $u(t,x)$ to this PDE has the unique representation
		\begin{equation*}
			\mathbb{E}[u(0,s_0)]=\mathbb{E}[\mathbb{E}[\Phi(X_T)\mid\widehat{\mathbb{W}}^{\infty}_T,X_0=s_0]]=\mathbb{E}[\Phi(X_T)].
		\end{equation*}
		Furthermore, recalling that $I_t:=\int_0^tv_sdW_s,\,t\in[0,T]$, the PDE \eqref{eq-RPDE-rep-t} can also be rewritten into a form that depends exclusively on the signature representation $\mathbf{v}_t=\langle\bs{\ell}_t,\widehat{\mathbb{W}}_t^{\infty}\rangle$ as follows
		\begin{equation*}
			\label{eq-RPDE-rep-t-degenerates}
			\begin{cases}
				-du=&\frac{1}{2}\partial^2_{xx}u(g^2(t,x)-f^2(t,x))\langle\bs{\ell}_t\shuffle \bs{\ell}_t,\widehat{\mathbb{W}}_t^{\infty}\rangle dt-\partial_xuf_0(t,x)\langle\bs{\ell}_t\shuffle\bs{\ell}_t,\widehat{\mathbb{W}}_t^{\infty}\rangle dt\\
				&+\partial_xuf(t,x)\langle\bs{\ell}_t,\widehat{\mathbb{W}}_t^{\infty}\rangle dW_t,\\
				u(T,x)=&\Phi(x).
			\end{cases}
		\end{equation*}
		
		\begin{remark}
			Following \cite{bank2025rough}, let $\mathbf{I}^g$ denote the Stratonovich rough path lift of the martingale $I$. The underlying asset price process \eqref{eq-sv} then admits an equivalent representation in a distributional sense:
			\begin{equation}
				\label{eq-RSDE-g}	
				dX_t=-\frac{1}{2}f_0(t,X_t)d[\mathbf{I}]_t+f(t,X_t)d\mathbf{I}^g_t+g(t,X_t)\mathbf{v}_tdB_t,\quad X_0=s_0,\; t\in[0,T],
			\end{equation}
			where the It\^o rough path $\mathbf{I}$ is uniquely identified with the pair $(\mathbf{I}^g,[\mathbf{I}])$ (see Lemma 2.8 in \cite{bank2025rough}). 
			Consequently, the European option price associated with the RSDE \eqref{eq-RSDE-g} satisfies the following RPDE:
			\begin{equation}
				\label{eq-RPDE-rep-t-degenerates-g}
				\begin{cases}
					-du=&\frac{1}{2}\big[\partial^2_{xx}ug^2(t,x)-\partial_xuf_0(t,x)\big]\mathbf{v}_t^2dt+\partial_xuf(t,x)d\mathbf{I}_t^g,\\
					u(T,x)=&\Phi(x).
				\end{cases}
			\end{equation}
			Because the RSDE \eqref{eq-RSDE-g} can be transformed into \eqref{eq-RSDE} via the relationship between the It\^o and Stratonovich integrals, it follows that the RPDE \eqref{eq-RPDE-rep-t-degenerates-g} is equivalent to the PDE \eqref{eq-RPDE-rep-t}. 		
		\end{remark}
		}

		\cpurple{As above, we replace the potentially non-Markovian pair $(v,I)$ with the rough pair $(\mathbf{v},\mathbf{I})$, and then represent it as an infinite linear combination of signatures. }
		However, in practice, it is impossible to choose the exact representation with an infinite sequence. Therefore, we truncate the infinite representation to finite linear combinations of signatures.
		Specifically, we introduce the projection $\pi_N(\cdot)$ for any $\bs{q}$ \cpurple{in an} extended tensor algebra  as follows
		\begin{equation*}
			\label{eq-truncation-projection}
			\pi_N(\bs{q})=(\pi_{(0)}(\bs{q}),\ldots,\pi_{(N)}(\bs{q}))=(\bs{q}^{(0)},\ldots,\bs{q}^{(N)}),
		\end{equation*}
		where $N\in\mathbb{N}$ denotes the truncation order.
		With the truncated sequences, the SDE~\eqref{eq-RSDE-rep-t} becomes
		\cpurple{
		\begin{equation}
			\label{eq-RSDE-rep-t-N}
			dX_t^N=\langle\pi_N(\bs{q}^f_t),\widehat{\mathbb{W}}_t^N\rangle dt+\langle f(t,X_t^N)\pi_N(\mathcal{D}_2\bs{p}_t),\widehat{\mathbb{W}}_t^N\rangle dW_t+\langle g(t,X_t^N)\pi_N(\bs{\ell}_t),\widehat{\mathbb{W}}_t^N\rangle dB_t,
		\end{equation}
		where
		\begin{equation*}
			\pi_N(\bs{q}^f_t):=f(t,X_t^N)\big[\pi_N(\dot{\bs{p}}_t)+\pi_N(\mathcal{D}_1\bs{p}_t)+\frac{1}{2}\pi_N(\mathcal{D}_{22}^2\bs{p}_t)\big],\quad t\in[0,T].
		\end{equation*}
		Here, the truncated coefficients of linear signature representations also lie in the admissible set $\mathcal{A}(\widehat{\mathbb{W}}^N)$. 
		Moreover, the corresponding PDE~\eqref{eq-RPDE-rep-t} becomes
		\begin{equation}
			\small
			\label{eq-RPDE-rep-t-N}
			\begin{cases}
				-du^N=&\big[\partial_xu^N[\langle\pi_N(\bs{q}_t^f),\widehat{\mathbb{W}}_t^N\rangle-\langle f_0(t,x)(\pi_N(\mathcal{D}_2\bs{p}_t)\shuffle\pi_N(\mathcal{D}_2\bs{p}_t)),\widehat{\mathbb{W}}_t^{2N}\rangle]\\
				&+\frac{1}{2}\partial_{xx}^2u^N\langle g^2(t,x)(\pi_N(\bs{\ell}_t)\shuffle\pi_N(\bs{\ell}_t))-f^2(t,x)(\pi_N(\mathcal{D}_2\bs{p}_t)\shuffle\pi_N(\mathcal{D}_2\bs{p}_t)),\widehat{\mathbb{W}}_t^{2N}\rangle\big]dt\\
				&+\partial_xu^N\langle f(t,x)\pi_N(\mathcal{D}_2\bs{p}_t),\widehat{\mathbb{W}}_t^N\rangle dW_t,\\
				u^N(T,x)=&\Phi(x).
			\end{cases}
		\end{equation}
		and $u^N(0,x)=\mathbb{E}\big[\Phi(X_T^N)\mid\widehat{\mathbb{W}}^{\infty}_T,X_0^N=x\big]$.
		The case of time-independent linear signature representation can be treated similarly by using truncated sequences. 
		}
		
	\subsection{Convergence analysis for linear signature representation}
	\label{sec-convergence}
		In this section, we study the gap between the finite and infinite linear signature representations, along with the solutions to the corresponding PDEs.
		To this end, we first establish the error bound for truncated linear signature combinations in Theorem~\ref{theorem-linear-rep-gap}, which serves as the foundation for our subsequent analysis.

		\begin{theorem}
			\label{theorem-linear-rep-gap}
			Let $(v_t)_{t\in[0,T]}$ be a $(\mathcal{F}_t^W)$-progressive and continuous stochastic process and $\langle\bs{\ell}_t,\widehat{\mathbb{W}}_t^{\infty}\rangle$ be the corresponding infinite linear signature representation.
			We define the finite linear signature representation of $(v_t)_{t\in[0,T]}$ by the first $N\in\mathbb{N}^+$ order components of its infinite linear signature representation.
			Then the gap between the finite and infinite linear signature representations, denoted by $G$, can be bounded by
			\begin{equation*}
				G_t^{\bs{\ell},N}:=\mathbb{E}\big|\langle\bs{\ell}_t,\widehat{\mathbb{W}}_t^{\infty}\rangle-\langle\pi_N(\bs{\ell}_t),\widehat{\mathbb{W}}_t^N\rangle\big|^2\leq\epsilon_t^{v,\widehat{\mathbb{W}},N},
			\end{equation*}
			where $\epsilon_t^{v,\widehat{\mathbb{W}},N}$, which depends on $t$, $v$, path signature of $W$ and the truncation order $N$, tends to zero as $N\to\infty$.		
		\end{theorem}
        \begin{proof}
            The result follows directly from Lemmas 3.2 and 3.3 of~\cite{bayraktar2024deep} by applying them with $k=1$ and taking $\Delta t\to0$.
        \end{proof}
		
		\cpurple{
		Theorem~\ref{theorem-linear-rep-gap} can be straightforwardly extended to the rough pair $(\mathbf{v},\mathbf{I})$, then we can give the error bounds for the solutions to~\eqref{eq-RSDE-rep-t} and~\eqref{eq-RSDE-rep-t-N}, as stated in Proposition~\ref{proposition-X-gap}.
		}

		\begin{proposition}
			\label{proposition-X-gap}
			Let $(\widehat{W}_t)_{t\in[0,T]}$ be a time-augmented one-dimensional Brownian motion.
			Assume that the deterministic functions $f$ and $g$ are locally Lipschitz continuous and satisfy the global linear growth conditions on $[0,T]$. 
			Under these assumptions, the gap between the infinite linear signature representation $X$ and its finite counterpart $X^N$ can be estimated by
			\begin{equation*}
				\label{eq-X-gap-l2}
				\mathbb{E}\big|X_t-X_t^N\big|^2\leq \epsilon_t^N,\quad t\in[0,T],
			\end{equation*}
			where $\epsilon_t^N\geq0$ is a sequence depending on $t$ and the truncation order $N$.
			In particular, $N\to\infty$ leads to $\epsilon_t^N\to0$.
		\end{proposition}
        \begin{proof}
            See Appendix \ref{appendA2}.
        \end{proof}
        	
		Having established the convergence properties of linear signature approximations, we now investigate how these representation gaps propagate to option pricing.
		Specifically, we quantify how the truncation errors in linear signature representations affect the gap between the solutions of the PDE~\eqref{eq-RPDE-rep-t} and its approximation~\eqref{eq-RPDE-rep-t-N}, as in Theorem~\ref{theorem-linear-option-price-gap}.

		\begin{theorem}
			\label{theorem-linear-option-price-gap}
			Assume that $\Phi:\mathbb{R}\to\mathbb{R}$ is Lipschitz continuous with constant $L_{\Phi}\geq0$ and solutions to SDEs~\eqref{eq-RSDE-rep-t} and~\eqref{eq-RSDE-rep-t-N} satisfy the bounds from Proposition \ref{proposition-X-gap}.
			The gap between solutions $u(0,s_0)$ and $u^N(0,s_0)$ to PDE~\eqref{eq-RPDE-rep-t} and~\eqref{eq-RPDE-rep-t-N}, respectively, can be estimated by
			\cpurple{
			\begin{equation*}
				\mathbb{E}\big|u(0,s_0)-u^N(0,s_0)\big|^2\leq L_{\Phi}^2\epsilon_T^N,
			\end{equation*}
			}
			where $\epsilon_T^N$ comes from Proposition~\ref{proposition-X-gap}.
			In particular, this gap decays to $0$ as $N\to\infty$.
		\end{theorem}
        \begin{proof}
            See Appendix \ref{appendA3}.
        \end{proof}

        We have established the convergence analysis for both time-dependent linear signature representations and their corresponding PDE solutions. The time-independent case follows analogously, thus requiring no further elaboration.

		\subsection{Nonlinear signature representation}
		{\color{purple}
		It should be noted that the preceding analysis is based on the restrictive assumption that the pair $(\mathbf{v},\mathbf{I})$ admits an infinite linear signature representation.
		To address the nonlinear path-dependent model, we introduce the deep nonlinear signature algorithm to approximate $(\mathbf{v},\mathbf{I})$. 
		Specifically, for $t\in[0,T]$, we define $\mathcal{N}^{\mathbf{v}}_t\big(\widehat{\mathbb{W}}^N_t;\,\phi_{\mathbf{v}}\big)$ and $\mathcal{N}^{\mathbf{I}}_t\big(\widehat{\mathbb{W}}^N_t;\,\phi_{\mathbf{I}}\big)$ as nonlinear neural networks parameterized by $\phi_{\mathbf{v}}$ and $\phi_{\mathbf{I}}$, respectively. 
		These neural networks take truncated signature inputs to approximate $\mathbf{v}_t$ and $\mathbf{I}_t$. 
		For simplicity, we suppress the arguments and denote them as $\mathcal{N}^{\mathbf{v},N}_t$ and $\mathcal{N}^{\mathbf{I},N}_t$ with no ambiguity. 
		As a consequence, the RSDE~\eqref{eq-RSDE} can be approximated by
		\begin{align}
			\label{eq-RSDE-rep-nonlinear}
			d\widehat{X}_t^N
			=&f(t,\widehat{X}_t^N)d\mathcal{N}_t^{\mathbf{I},N}+g(t,\widehat{X}_t^N)\mathcal{N}_t^{\mathbf{v},N}dB_t\nno\\
			=&f(t,\widehat{X}_t^N)\big(\langle\nabla\mathcal{N}_t^{\mathbf{I},N},d\widehat{\mathbb{W}}^N_t\rangle+\frac{1}{2}\langle\nabla^2\mathcal{N}_t^{\mathbf{I},N},d\widehat{\mathbb{W}}^N_t\otimes d\widehat{\mathbb{W}}^N_t\rangle\big)+g(t,\widehat{X}_t^N)\mathcal{N}_t^{\mathbf{v},N}dB_t\nno\\
			=&f(t,\widehat{X}_t^N)\big[\langle\nabla\mathcal{N}_t^{\mathbf{I},N},\pi_N(\widehat{\mathbb{W}}^N_t\otimes(e_1+\frac{1}{2}e_2\otimes e_2))\rangle dt+\langle\nabla\mathcal{N}_t^{\mathbf{I},N},\pi_N(\widehat{\mathbb{W}}^N_t\otimes e_2)\rangle dW_t\nno\\
			&+\frac{1}{2}\langle\nabla^2\mathcal{N}_t^{\mathbf{I},N},\pi_N(\widehat{\mathbb{W}}^N_t\otimes e_2)\otimes\pi_N(\widehat{\mathbb{W}}^N_t\otimes e_2)\rangle dt\big]+g(t,\widehat{X}_t^N)\mathcal{N}_t^{\mathbf{v},N}dB_t\nno\\
			=&\widehat{\bs{q}}^{f,N}_tdt+f(t,\widehat{X}_t^N)\langle\nabla\mathcal{N}_t^{\mathbf{I},N},\pi_N(\widehat{\mathbb{W}}^N_t\otimes e_2)\rangle dW_t+g(t,\widehat{X}_t^N)\mathcal{N}_t^{\mathbf{v},N}dB_t,
		\end{align}
		where
		\begin{align*}
			\label{eq-coefficient-q-t-nonlinear}
			\widehat{\bs{q}}^{f,N}_t:=&f(t,\widehat{X}_t^N)\big[\langle\nabla\mathcal{N}_t^{\mathbf{I},N},\pi_N(\widehat{\mathbb{W}}^N_t\otimes(e_1+\frac{1}{2}e_2\otimes e_2))\rangle\\
			&+\frac{1}{2}\langle\nabla^2\mathcal{N}_t^{\mathbf{I},N},\pi_N(\widehat{\mathbb{W}}^N_t\otimes e_2)\otimes\pi_N(\widehat{\mathbb{W}}^N_t\otimes e_2)\rangle\big].\nno
		\end{align*}
		In the linear case described earlier, the shuffle property was employed to streamline the computation of linear functionals. In the absence of the shuffle property, however, the PDE~\eqref{eq-RPDE-rep-t} and~\eqref{eq-RPDE-rep-t-N} can still be obtained through multiplication of the tensor inner products. Therefore, the nonlinear case can be analyzed analogously to obtain the truncated classical PDE via the Feynman-Kac theorem.
		}
		We then follow~\cite{bayraktar2024deep} to present the error bound for the truncated nonlinear signature representation in Theorem~\ref{theorem-nonlinear-rep}.
		\begin{theorem}
			\label{theorem-nonlinear-rep}
			\cpurple{
			Let $(v_t)_{t\in[0,T]}$ be a $(\mathcal{F}_t^W)$-progressive and continuous stochastic process, and  $\mathcal{N}^{\mathbf{v}}_t\big(\widehat{\mathbb{W}}_t^N;\phi_{\mathbf{v}}\big)$ denotes the neural network approximation of its It\^o rough path lift $(\mathbf{v})_{t\in[0,T]}$. 
			We define the approximation error as follows
			\begin{equation*}
				\label{}
				\mathcal{E}_t^{\mathcal{N},\mathbf{v}}:=\inf\limits_{\phi_{\mathbf{v}}}\mathbb{E}\big|\mathbf{v}_t-\mathcal{N}^{\mathbf{v}}_t\big(\widehat{\mathbb{W}}_t^N;\phi_{\mathbf{v}}\big)\big|^2.
			\end{equation*}
			We then have the following bound of the error:
			\begin{equation*}
		      \mathcal{E}^{\mathcal{N},\mathbf{v}}_t\leq\epsilon^{\mathcal{N},\mathbf{v}}_t+\epsilon^{\mathbf{v},\widehat{\mathbb{W}},N}_t,
			\end{equation*}
			where $\epsilon^{\mathcal{N},\mathbf{v}}_t$ denotes the error introduced from the neural network at time $t$, and $\epsilon^{\mathbf{v},\widehat{\mathbb{W}},N}_t$ denotes the error from the truncation of the signature at order $N$.
			}
		\end{theorem}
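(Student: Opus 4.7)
The plan is to decompose the approximation error via the triangle inequality through an auxiliary ``reference'' quantity: the same neural network $\mathcal{N}^v_t(\widehat{\mathbb{W}}^\infty_t;\phi_v)$ evaluated on the full (untruncated) signature. For any fixed parameter $\phi_v$ one has
\begin{equation*}
|v_t - \mathcal{N}^v_t(\widehat{\mathbb{W}}^N_t;\phi_v)| \leq |v_t - \mathcal{N}^v_t(\widehat{\mathbb{W}}^\infty_t;\phi_v)| + |\mathcal{N}^v_t(\widehat{\mathbb{W}}^\infty_t;\phi_v) - \mathcal{N}^v_t(\widehat{\mathbb{W}}^N_t;\phi_v)|.
\end{equation*}
Taking expectations and then the infimum over $\phi_v$, the two terms are naturally identified with $\epsilon^{\mathcal{N},v}_t$ (the pure neural-network approximation error, which would be incurred even if the full signature were available as input) and $\epsilon^{\mathcal{N},v,N}_t$ (the perturbation caused by replacing the full signature by its order-$N$ truncation), yielding exactly the claimed bound. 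This decomposition is clean provided the chosen $\phi_v$ is admissible for both evaluations, which is guaranteed by fixing the architecture before truncation.

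Next, I would control the two pieces independently. For $\epsilon^{\mathcal{N},v}_t$: since $(v_t)$ is $(\mathcal{F}^W_t)$-progressive and continuous, and since the Brownian signature characterizes the path $(W_s)_{s\leq t}$ up to tree-like equivalence by the Hambly-Lyons uniqueness theorem, $v_t$ admits a representation as a continuous functional of $\widehat{\mathbb{W}}^\infty_t$. A standard application of the universal approximation theorem then guarantees that this residual can be made arbitrarily small by enlarging $\mathcal{N}^v$. For $\epsilon^{\mathcal{N},v,N}_t$: I would exploit Lipschitz continuity of the chosen neural network with respect to its signature inputs and combine it with the factorial truncation estimate from Theorem~\ref{theorem-linear-rep-gap}, yielding decay of order $O(C^{N+1}/(N+1)!)$ in $N$ after passing to expectations.

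The hard part is the tension between these two estimates. To drive $\epsilon^{\mathcal{N},v}_t$ down one typically needs a richer network, whose Lipschitz constant with respect to its input grows with its width and depth; but this same Lipschitz constant inflates $\epsilon^{\mathcal{N},v,N}_t$ when multiplied against the signature truncation gap. Balancing network complexity against truncation order $N$ is therefore the delicate step. A natural route, following the argument structure of~\cite{bayraktar2024}, is to select a family of architectures whose Lipschitz constants grow polynomially in some width parameter, and then exploit the factorial decay of the truncation bound to ensure both errors tend to zero simultaneously as $N\to\infty$ with the network scaled appropriately. A secondary technical point is that $\mathcal{N}^v_t(\widehat{\mathbb{W}}^\infty_t;\phi_v)$ must be well-defined, which requires verifying summability of the infinite signature series under the chosen activation — an issue that can be circumvented by restricting attention to networks whose first layer acts only on the finite block $\widehat{\mathbb{W}}^{\leq N_0}_t$ for some reference $N_0$ and extending trivially, at the price of a harmless bookkeeping term in the final bound.
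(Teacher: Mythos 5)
Your decomposition routes through $\mathcal{N}^v_t(\widehat{\mathbb{W}}^\infty_t;\phi_v)$, the network evaluated on the \emph{infinite} signature, and this is where the trouble lies. The paper instead inserts a \emph{linear} functional as the intermediate object and uses a three-term triangle inequality:
\begin{equation*}
\mathbb{E}\left|v_t-\mathcal{N}^v_t(\widehat{\mathbb{W}}_t^N;\phi_v)\right|\leq\mathbb{E}\left|v_t-\mathcal{L}(\widehat{\mathbb{W}}_t^{\infty})\right|+\mathbb{E}\left|\mathcal{L}(\widehat{\mathbb{W}}_t^{\infty})-\mathcal{L}(\widehat{\mathbb{W}}_t^N)\right|+\mathbb{E}\left|\mathcal{L}(\widehat{\mathbb{W}}_t^N)-\mathcal{N}^v_t(\widehat{\mathbb{W}}_t^N;\phi_v)\right|,
\end{equation*}
where the first term is controlled by the universal nonlinearity property of signatures, the middle term is exactly the linear truncation gap of Theorem~\ref{theorem-linear-rep-gap} (factorial decay, with no network constant attached), and the last term is a finite-dimensional universal approximation statement. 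The first and third terms are grouped into $\epsilon^{\mathcal{N},v}_t$ and the middle one is $\epsilon^{\mathcal{N},v,N}_t$. This ordering is what makes the two error sources cleanly separable.

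Your version has two concrete gaps. First, $\mathcal{N}^v_t(\widehat{\mathbb{W}}^\infty_t;\phi_v)$ is not a well-defined object for a fixed finite architecture, and your proposed fix --- letting the first layer act only on a finite block $\widehat{\mathbb{W}}^{\leq N_0}_t$ --- makes the decomposition circular: the ``pure network error'' term then already contains a truncation error at level $N_0$, so you have not actually isolated the truncation contribution you are trying to name $\epsilon^{\mathcal{N},v,N}_t$. Second, and more seriously, the tension you correctly identify --- shrinking the network error requires richer networks whose input-Lipschitz constants inflate the truncation term --- is a real obstruction in your decomposition, and your resolution (``select architectures whose Lipschitz constants grow polynomially and rely on factorial decay'') is asserted rather than proved; nothing in the statement or in \cite{bayraktar2024} hands you such a family with the required uniform control. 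The paper's choice of a linear intermediate functional dissolves this tension entirely, because the truncation estimate of Theorem~\ref{theorem-linear-rep-gap} depends only on $\sup_t\|\bs{\ell}_t\|_\infty$ and the path constant, never on the network. To repair your argument you would either need a quantitative Lipschitz-versus-approximation trade-off for the chosen architecture, or you should simply switch the intermediate object to a linear functional as the paper does.
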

        \begin{proof}
            See Appendix \ref{appendA4}.
        \end{proof}
		The error bounds propagate to the solutions of the SDE~\eqref{eq-RSDE-rep-nonlinear} and ultimately to the corresponding PDE, following a pattern similar to the time-dependent linear case, as stated below.
		\begin{theorem}
			\label{theorem-nonlinear-option-price-gap}
			Let \cpurple{$\widehat{u}^{N}(0,s_0):=\mathbb{E}\big[\Phi\big(\widehat{X}_T^N\big)\mid\widehat{\mathbb{W}}^{\infty}_T,\widehat{X}_0^N=s_0\big]$}, where $\Phi$ is a Lipschitz continuous function with constant $L_{\Phi}\geq0$ and $\widehat{X}^N$ satisfies the SDE~\eqref{eq-RSDE-rep-nonlinear} with $N$-th order truncated signature inputs. Next, the gap between the solutions $u(0,s_0)$ and $\widehat{u}^N(0,s_0)$ can be estimated by
			\cpurple{
			\begin{equation*}
				\mathbb{E}\big|u(0,s_0)-\widehat{u}^N(0,s_0)\big|^2\leq L_{\Phi}^2\epsilon_T^{N,\mathcal{N}},
			\end{equation*}
			}
			where $\epsilon_T^{N,\mathcal{N}}\geq0$ is a sequence depending on $t$, the truncation order $N$ and the nonlinear neural networks $\mathcal{N}$.
		\end{theorem}
		\begin{proof}
			The proof is an application of Theorem~\ref{theorem-nonlinear-rep} and follows directly from the argument presented in Proposition~\ref{proposition-X-gap} and Theorem~\ref{theorem-linear-option-price-gap}, requiring no additional elaboration.
		\end{proof}
		
	\section{Numerical examples}
	\label{sec-examples}
		This section provides illustrative examples of European put option pricing using the linear and nonlinear signature representations of the pair $(\mathbf{v},\mathbf{I})$.
		\cpurple{
		All computations are performed using an NVIDIA RTX A5500 GPU. 
		The full implementation, data-generation scripts, and examples are available at~\href{https://github.com/changanluoxue/RSDE.git}{[GitHub link]}.
		}
		In these examples, we consider various stochastic volatility processes $v$, which are summarized in Table~\ref{tab-example-vol-processes}.
        For discretization, we partition the time interval $[0,T]$ into a uniform grid $\{t_j := j\Delta t,\, j=0,1,\ldots,J\}$, where $J\in\mathbb{N}^+$ and $\Delta t=T/J$.
		Moreover, we construct another uniform space-grid $\{x_l:=x_0+l\Delta x,\,l=0,1,\ldots,L\}$ on $[x_0,x_L]$ with $x_0,x_L\in\mathbb{R}$ and $x_0<x_L$, where $L\in\mathbb{N}^+$ and $\Delta x=\frac{x_L-x_0}{L}$.
		Using the established time-space grid, we give the finite-difference schemes for SDEs and PDEs with linear signature representations in the following subsection.

		\begin{table}[H]
			\caption{Examples of the stochastic volatility processes with their corresponding SDEs. Here, $v_t$ is the volatility with initial value $v_0>0$, $\alpha=\frac{1}{2}-H\in(0,1/2)$ and $\kappa,\theta,\sigma,\eta\in\mathbb{R}$.}
			\label{tab-example-vol-processes}
			\smallskip
			\centering
			\renewcommand\arraystretch{1.5}
			\begin{tabular}{l c}
				\toprule
				Volatility Process/Model&Stochastic Differential Equations\\
				\hline
				OU~\cite{uhlenbeck1930theory}&$dv_t=\kappa(\theta-v_t)dt+\eta dW_t$\\
				\hline
				mGBM~\cite{kluppelberg2004continuous}&$dv_t=\kappa(\theta-v_t)dt+(\eta+\sigma v_t)dW_t$\\
				\hline
				rHeston &$v_t=v_0+\int_0^tK(t-s)f(v_s)ds+\int_0^tK(t-s)g(v_s)dW_s,$\\
				\cite{el2019characteristic}&$K(t-s)=\frac{(t-s)^{-\alpha}}{\Gamma(1-\alpha)},\quad f(v_s)=\kappa(\theta-v_s),\quad g(v_s)=\sigma\sqrt{v_s}$\\
				\hline
				rBergomi~\cite{bayer2016pricing}&$v_t=v_0\exp{\left(\eta\int_0^t(t-s)^{-\alpha}dW_s\right)}$\\
				\bottomrule
			\end{tabular}
		\end{table}

		\subsection{Finite-difference schemes for the SDE and PDE with linear signature representations}
		In this section, we present the discretized SDE~\eqref{eq-RSDE-rep-t-N} and PDE~\eqref{eq-RPDE-rep-t-N} on the time-space grid points $(t_j,x_l)$, where $j=0,1,\ldots,J$ and $l=0,1,\ldots,L$.
		To simplify notation, we replace the discrete time and space stamps $t_j$ and $x_l$ with their index $j$ and $l$, respectively.
		
		Now, we can give the discretized SDE~\eqref{eq-RSDE-rep-t-N} as follows
		\cpurple{
		\begin{align}
			\label{eq-RSDE-rep-t-N-discrete}
			X_{j+1}^N
			=&X_{j}^N+\langle\pi_N(\bs{q}^f_j),\widehat{\mathbb{W}}_j^N\rangle\Delta t+\langle f(j,X_j^N)\pi_N(\mathcal{D}_2\bs{p}_j),\widehat{\mathbb{W}}_j^N\rangle (W_{j+1}-W_j)\\
			&+\langle g(j,X_j^N)\pi_N(\bs{\ell}_j),\widehat{\mathbb{W}}_j^N\rangle (B_{j+1}-B_j),\nno
		\end{align}
		where
		\begin{equation*}
			\pi_N(\bs{q}^f_j):=f(j,X_j^N)\left[\frac{\pi_N(\bs{p}_{j+1})-\pi_N(\bs{p}_j)}{\Delta t}+\pi_N(\mathcal{D}_1\bs{p}_j)+\frac{1}{2}\pi_N(\mathcal{D}_{22}^2\bs{p}_j)\right].
		\end{equation*}
		Here, we use time-dependent coefficients $\bs{\ell}_j$ and $\bs{p}_j$. 
		The time-independent case can be derived similarly and straightforwardly.
		}
		
		Next, we discretize the PDE~\eqref{eq-RPDE-rep-t-N} by the first-order finite difference scheme with additional Dirichlet boundary conditions. To achieve this, we first define the boundary functions $\psi_0,\,\psi_L:\,[0,T]\to\mathbb{R}$ as follows
		\begin{eqnarray*}
			&u(t,x_0)=\psi_0(t),\; u(t,x_L)=\psi_L(t),\; u(T,x)=\Phi(x),\quad (t,x)\in[0,T]\times[x_0,x_L].
		\end{eqnarray*}
		For simplicity, the truncation order argument for $u$ is omitted here.
		Then we employ the Crank-Nicolson method for discretization, as the explicit scheme was observed to be unstable unless very small time steps were used.
		Thus, we approximate the spatial derivatives in PDE~\eqref{eq-RPDE-rep-t-N} by central finite difference quotients as follows
		\begin{equation*}
			\label{eq-derivatives}
			\delta_xu_j^l:=\frac{u_j^{l+1}-u_j^{l-1}}{2\Delta x},\qquad
			\delta^2_{xx}u_j^l:=\frac{u_j^{l+1}+u_j^{l-1}-2u_j^l}{(\Delta x)^2}
		\end{equation*}
		for $j=0,1,\ldots,J$ and $l=1,2,\dots,L-1$.
		From all things above, the Crank-Nicolson discretization reads
		\cpurple{
		\begin{flalign}
			\label{eq-RPDE-rep-t-N-discrete}
			-(u_{j+1}^l-u_j^l) =\frac{\Delta t}{2}(\mathcal{L}_j^l+\mathcal{L}_{j+1}^l)+\delta_xu_{j+1}^l\langle f(t_{j+1},x_l)\pi_N(\mathcal{D}_2\bs{p}_{j+1}),\widehat{\mathbb{W}}_{j+1}^N\rangle(W_{j+1}-W_j)
		\end{flalign}
		for $0\leq j\leq J-1$, $1\leq l\leq L-1$, with boundary conditions
		\begin{equation*}
			u_j^0=\psi_0(t_j),\;u_j^L=\psi_L(t_j),\;u_J^l=\Phi(x_l).
		\end{equation*}
		The operator $\mathcal{L}$ reads
		\begin{align*}
			\mathcal{L}_j^l
			=&\delta_xu_j^l[\langle\pi_N(\bs{q}_j^f),\widehat{\mathbb{W}}_j^N\rangle-\langle f_0(t_j,x_l)(\pi_N(\mathcal{D}_2\bs{p}_j)\shuffle\pi_N(\mathcal{D}_2\bs{p}_j)),\widehat{\mathbb{W}}_j^{2N}\rangle]\\
			&+\frac{1}{2}\delta^2_{xx}u_j^l\langle g^2(t_j,x_l)(\pi_N(\bs{\ell}_j)\shuffle\pi_N(\bs{\ell}_j))-f^2(t_j,x_l)(\pi_N(\mathcal{D}_2\bs{p}_j)\shuffle\pi_N(\mathcal{D}_2\bs{p}_j)),\widehat{\mathbb{W}}_j^{2N}\rangle.
		\end{align*} 
		}
		
		Following the above discretization framework, we now present the first example with the Markovian volatility processes: OU and mGBM, from Table~\ref{tab-example-vol-processes}.

		\begin{example}[Markovian volatility processes with linear signature representations]
		\label{example-Markovian}
			\cpurple{
			In this example, we first analytically reconstruct the rough pair $(\mathbf{v},\mathbf{I})$ for the Markovian volatility processes $v_t$ and its integrated counterpart $I_t:=\int_0^tv_sdW_s$ using linear signature representations with time-independent coefficients $\bs{\ell}$ and $\bs{p}$, respectively.}
			We then incorporate them into the above discretization framework to compute European put option prices. 
			\cpurple{
			Specifically, we can estimate prices either by simulating paths via the discretized SDE~\eqref{eq-RSDE-rep-t-N-discrete} to evaluate the conditional expectation $\mathbb{E}\big[\Phi(X_T)\mid\widehat{\mathbb{W}}^N_T\big]$ (SDE approach), or by computing the solution to PDE~\eqref{eq-RPDE-rep-t-N-discrete} directly (PDE approach). These two approaches are equivalent due to the Feynman-Kac theorem. Finally, we use the tower property to get the unconditional expectation $\mathbb{E}\big[\Phi(X_T)\big]$.
			}
			
			For implementation details, we consider the case where the asset-price dynamics~\eqref{eq-sv} follow the SABR dynamics, i.e., $f(t,x)=\rho x^{\beta}$ and $g(t,x)=\sqrt{1-\rho^2}x^{\beta}$ for some $\beta\in\left(\frac{1}{2},1\right)$. And the volatility process $v$ follows either the OU or mGBM process as specified in Table~\ref{tab-example-vol-processes}.
			We set the time and space step size as $\Delta t=1/251$ and $\Delta x=1/400$, respectively.
			The parameters for the SABR stochastic volatility model and European put options are specified as follows
			\cpurple{
			\begin{eqnarray*}
                \text{OU}&\colon&\kappa=1,\;\theta=0.25,\;\eta=1.2,\;v_0=0.1,\;\rho=-0.4, \;\beta=0.6,\;K=110,\;T=1;\\
                \text{mGBM}&\colon&\kappa=1,\;\theta=0.25,\;\sigma=0.5,\;\eta=0,\;v_0=0.1,\;\rho=-0.4, \;\beta=0.6,\;K=110,\;T=1.
			\end{eqnarray*}}
			where $K$ is the strike price.
		\end{example}
		
		We begin by computing the time-independent coefficients $\bs{\ell}$ and $\bs{p}$ of the linear signature representations $\widetilde{\mathbf{v}}$ and $\widetilde{\mathbf{I}}$.
		We first consider the case where $v$ follows the OU process, then the linear signature representation $\widetilde{\mathbf{v}}$ has coefficients of the following algebraic form
		\begin{equation*}
			\widetilde{\mathbf{v}}_t=\langle\bs{\ell}^{OU},\widehat{\mathbb{W}}_t^{\infty}\rangle,\qquad
			\bs{\ell}^{OU}=(v_0\cblue{\bs{\Phi}}+\kappa\theta\cblue{\bs{1}}+\eta\cblue{\bs{2}})e^{\shuffle-\kappa\cblue{\bs{1}}}.
		\end{equation*}
		To be more explicit, the linear form of $\bs{\ell}^{OU}$ reads
		\begin{equation*}
			\bs{\ell}^{OU}
			=\left(v_0,
			\begin{pmatrix}
				-\kappa(v_0-\theta) \\ \eta
			\end{pmatrix},
			\begin{pmatrix}
				\kappa^2(v_0-\theta)&0\\
				-\kappa\eta&0
			\end{pmatrix},
			\begin{pmatrix}
				-\kappa^3(v_0-\theta)&0&\\
				0&0&\\
				&\kappa^2\eta&0\\
				&0&0
			\end{pmatrix},\ldots
			\right),
		\end{equation*}
		where the blue letters denote the tensor product of the canonical basis, i.e., $\cblue{\bs{1}}\cdots\cblue{\bs{d}}:=e_1\otimes\cdots\otimes e_d$ for $d=1,2,\ldots$ (see, e.g.,~\cite{abi2025signature}, Section 2.1).
		With Proposition~\ref{proposition-time-independent}, we can further give the coefficient $\bs{p}^{OU}$ of the linear representation $\widetilde{I}$ as follows
		\begin{flalign*}
			\bs{p}^{OU}
			=&v_0\cdot\cblue{\bs{2}}-\kappa(v_0-\theta)\cdot\cblue{\bs{12}}+\eta\cdot\cblue{\bs{22}}+\kappa^2(v_0-\theta)\cdot\cblue{\bs{112}}-\kappa\eta\cdot\cblue{\bs{212}}\\
			& - \kappa^3(v_0-\theta)\cdot\cblue{\bs{1112}} + \kappa^2\eta\cdot\cblue{\bs{2112}} + \kappa^4(v_0-\theta)\cdot\cblue{\bs{11112}} - \kappa^3\eta\cdot\cblue{\bs{21112}} + \cdots.
		\end{flalign*}
		
		Similarly, the mGBM process also admits an infinite linear signature representation with coefficients of the following algebraic form
		\begin{equation*}
			\widetilde{\mathbf{v}}_t=\langle\bs{\ell}^{mGBM}, \widehat{\mathbb{W}}_t^{\infty}\rangle,\quad \bs{\ell}^{mGBM}=\left(v_0\cblue{\bs{\Phi}}+\left(\kappa\theta-\frac{\sigma\eta}{2}\right)\cblue{\bs{1}}+\eta\cblue{\bs{2}}\right)e^{\shuffle\left(-\left(\kappa+\frac{\sigma^2}{2}\right)\cblue{\bs{1}}+\sigma\cblue{\bs{2}}\right)}.
		\end{equation*}
		The linear form of $\bs{\ell}^{mGBM}$ reads
		\begin{flalign*}
			\bs{\ell}^{mGBM}&=\left( v_0,
			\begin{pmatrix}
				v_0\lambda+\gamma \\ v_0\sigma+\eta
			\end{pmatrix} ,
			\begin{pmatrix}
				\lambda(v_0\lambda+\gamma)&\sigma(v_0\lambda+\gamma)\\
				\lambda(v_0\sigma+\eta)&\sigma(v_0\sigma+\eta)
			\end{pmatrix} ,\right.\\
			&\left.\begin{pmatrix}
				\lambda^2(v_0\lambda+\gamma)&\lambda\sigma(v_0\lambda+\gamma)&\\
				\lambda\sigma(v_0\lambda+\gamma)&\sigma^2(v_0\lambda+\gamma)&\\
				&\lambda^2(v_0\sigma+\eta)&\lambda\sigma(v_0\sigma+\eta)\\
				&\lambda\sigma(v_0\sigma+\eta)&\sigma^2(v_0\sigma+\eta)
			\end{pmatrix} ,\ldots
			\right),
		\end{flalign*}
		where $\lambda=-\left(\kappa+\frac{\sigma^2}{2}\right)$ and $\gamma=\kappa\theta-\frac{\sigma\eta}{2}$. 
		We can further derive the coefficient $\bs{p}^{mGBM}$ as follows
		\begin{flalign*}
			&\bs{p}^{mGBM}
			=v_0\cdot\cblue{\bs{2}}+(v_0\lambda+\gamma)\cdot\cblue{\bs{12}}+(v_0\sigma+\eta)\cdot\cblue{\bs{22}}\\
			&+\lambda(v_0\lambda+\gamma)\cdot\cblue{\bs{112}}+\sigma(v_0\lambda+\gamma)\cdot\cblue{\bs{122}}+\lambda(v_0\sigma+\eta)\cdot\cblue{\bs{212}}+\sigma(v_0\sigma+\eta)\cdot\cblue{\bs{222}}\\
			&+\lambda^2(v_0\lambda+\gamma)\cdot\cblue{\bs{1112}}+\lambda\sigma(v_0\lambda+\gamma)\cdot\cblue{\bs{1122}}\\
			&+\lambda\sigma(v_0\lambda+\gamma)\cdot\cblue{\bs{1212}}+\sigma^2(v_0\lambda+\gamma)\cdot\cblue{\bs{1222}}+\cdots.
		\end{flalign*}
		
		Now, having these explicit coefficients in hand, we can analytically reconstruct $\mathbf{v}$ and $\mathbf{I}$ with linear signature representations $\widetilde{\mathbf{v}}$ and $\widetilde{\mathbf{I}}$, respectively. We employ a Monte Carlo method with the path number $M=10,000$ and the number of time intervals $J=251$ as our benchmark and evaluate the path-wise mean absolute error (MAE)  $\epsilon$ and the overall MAE $\mathcal{E}$ as follows
		\begin{eqnarray*}
		\label{eq-MAE}
			\epsilon(A_{m,\cdot},\widetilde{A}_{m,\cdot})&:=&\frac{1}{J+1}\sum\limits_{j=0}^J\big|A_{m,j}-\widetilde{A}_{m,j}\big|,\,m=1,\ldots,M,\\
			\mathcal{E}(A,\widetilde{A})&:=&\frac{1}{M}\sum\limits_{m=1}^M\epsilon(A_{m,\cdot},\widetilde{A}_{m,\cdot}),
		\end{eqnarray*}
		where $A$ and $\widetilde{A}$ represent arbitrary matrices containing $M$ paths, each of length $J+1$.
		In Figure~\ref{fig-example1-v-I-box}, we plot the distribution of path-wise MAEs $\epsilon$ and present the overall MAEs $\mathcal{E}$ between the Monte Carlo sample paths $(\mathbf{v},\mathbf{I})$ and their linear signature representations $(\widetilde{\mathbf{v}},\widetilde{\mathbf{I}})$ at different truncation levels for both OU and mGBM processes. The corresponding values are detailed in Table~\ref{tab-example1-v-I-MAE}.
		\cpurple{
		The results show that the accuracy and robustness of linear signature representations $\widetilde{\mathbf{v}}$ and $\widetilde{\mathbf{I}}$ (for both OU and mGBM) increases significantly as the truncation level increases. 
		Moreover, those for OU exhibit greater sensitivity to truncation levels than those for mGBM. 
		These indicate that the accuracy and robustness of linear signature representations is sensitive to both the signature truncation level and the type of volatility process.
		}
		
		\begin{figure}[H]
			\centering
			\subfigure[OU]{\includegraphics[width=0.75\textwidth]{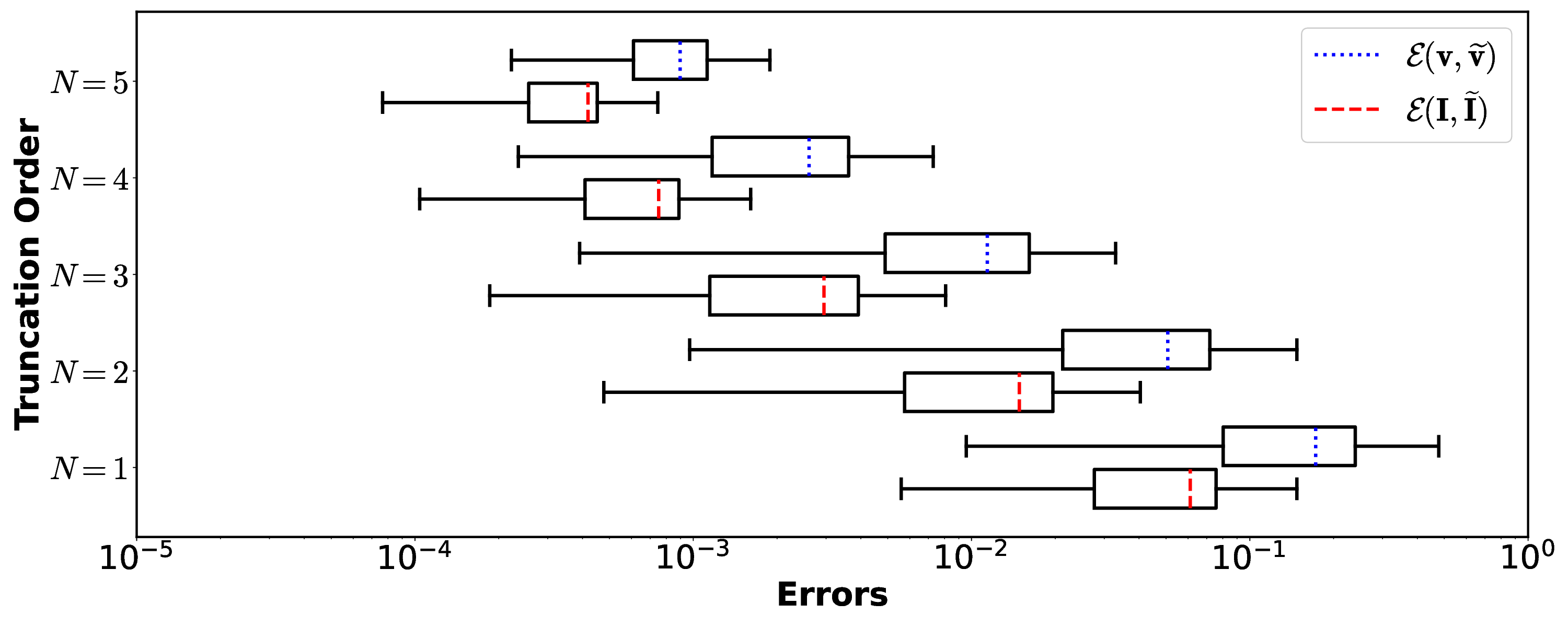}}
			\subfigure[mGBM]{\includegraphics[width=0.75\textwidth]{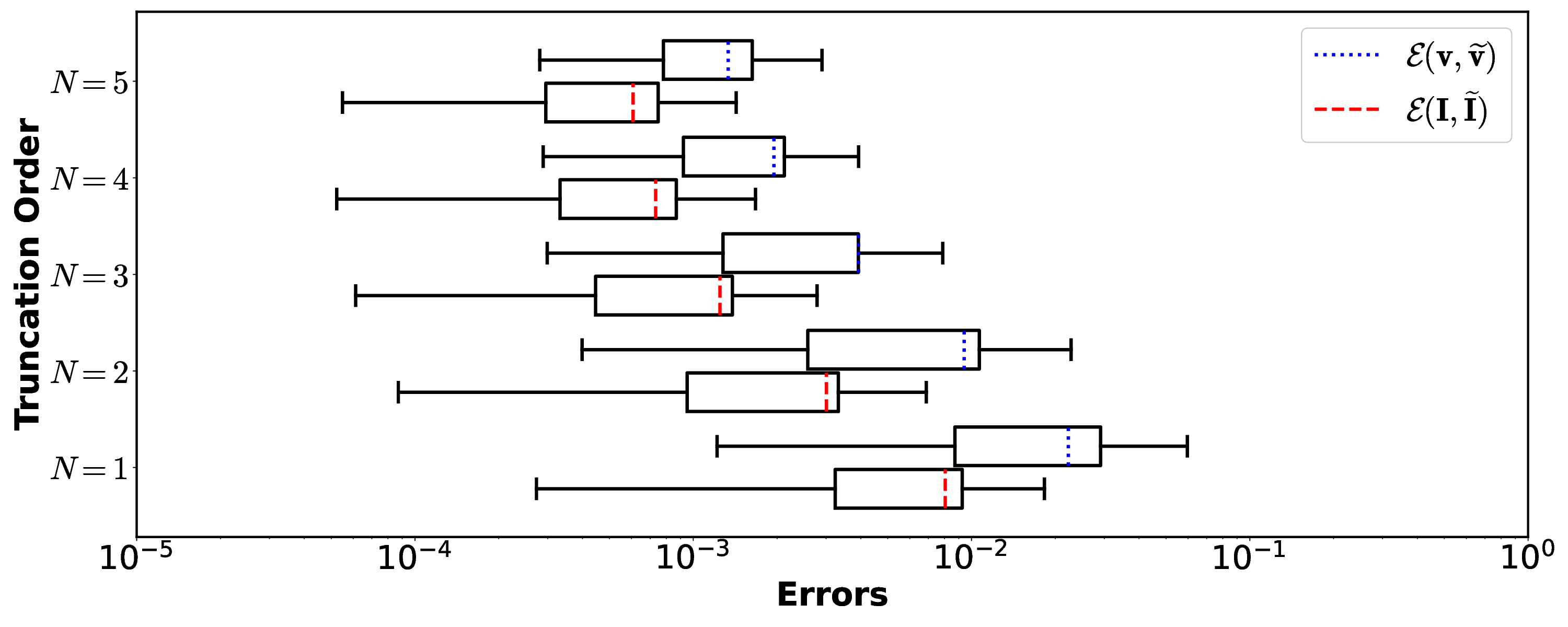}}
			\caption{Accuracy of linear signature representations $(\widetilde{\mathbf{v}},\widetilde{\mathbf{I}})$ for truncation levels $N=1$ to $5$ under the OU process and mGBM: Each box spans from the first quartile ($25\%$) to the third quartile ($75\%$) of path-wise MAEs $\epsilon$. The whisker boundaries denote the maximum and minimum $\epsilon$. The overall MAEs $\mathcal{E}$ are marked out by the dashed (or dotted) lines. The horizontal axis uses a logarithmic scale, and the results are based on the Monte Carlo method with path number $M=10,000$ and the number of time intervals $J=251$.}
			\label{fig-example1-v-I-box}
		\end{figure}
		
		\begin{table}[H]
			\caption{Overall MAEs $\mathcal{E}$ between $(\mathbf{v},\mathbf{I})$ and $(\widetilde{\mathbf{v}},\widetilde{\mathbf{I}})$ corresponding to Figure~\ref{fig-example1-v-I-box} for different truncation levels and volatility processes. Standard deviations of path-wise MAEs $\epsilon$ appear in parentheses.}
			\label{tab-example1-v-I-MAE}
			\smallskip
			\centering
			\renewcommand\arraystretch{1.3}
			\setlength{\tabcolsep}{4pt}
			\begin{tabular}{l c c c c}
				\toprule
				\multirow{2}{*}{Truncation Level} & \multicolumn{2}{c}{Errors for OU process} & \multicolumn{2}{c}{Errors for mGBM}\\
				\cmidrule{2-5}
				& $\mathcal{E}(\mathbf{v},\widetilde{\mathbf{v}})$ & $\mathcal{E}(\mathbf{I},\widetilde{\mathbf{I}})$ & $\mathcal{E}(\mathbf{v},\widetilde{\mathbf{v}})$ & $\mathcal{E}(\mathbf{I},\widetilde{\mathbf{I}})$ \\
				\midrule
				\multirow{1}{*}{$N=1$} & 1.72e-1 (1.18e-1) & 6.11e-2 (5.24e-2) & 2.23e-2 (1.98e-2) & 8.05e-3 (9.11e-3)\\
				\multirow{1}{*}{$N=2$} & 5.07e-2 (3.65e-2) & 1.49e-2 (1.33e-2) & 9.41e-3 (1.19e-2) & 3.01e-3 (4.38e-3)\\
				\multirow{1}{*}{$N=3$} & 1.14e-2 (8.11e-3) & 2.95e-3 (2.58e-3) & 3.92e-3 (5.49e-3) & 1.25e-3 (1.84e-3) \\
				\multirow{1}{*}{$N=4$} & 2.61e-3 (1.80e-3) & 7.52e-4 (5.82e-4) & 1.95e-3 (2.22e-3) & 7.33e-4 (7.89e-4)\\
				\multirow{1}{*}{$N=5$} & 8.97e-4 (3.80e-4) & 4.19e-4 (3.17e-4) & 1.34e-3 (9.37e-4) & 6.08e-4 (5.14e-4)\\
				\bottomrule
			\end{tabular}
		\end{table}
		
        As outlined previously, we now approximate the European put option prices using $(\widetilde{\mathbf{v}},\widetilde{\mathbf{I}})$ by substituting them into both the discrete SDE~\eqref{eq-RSDE-rep-t-N-discrete} and the PDE~\eqref{eq-RPDE-rep-t-N-discrete}.\cpurple{
		 That is, we employ SDE~\eqref{eq-RSDE-rep-t-N-discrete} for Monte Carlo simulation to compute the expectation $\mathbb{E}\big[\Phi(X_T)\big]$, while utilizing the expected solution to PDE~\eqref{eq-RPDE-rep-t-N-discrete} as the option price through the Feynman-Kac theorem.
		As before, we use Monte Carlo-simulated option prices $\mathbb{E}\big[\Phi(S_T)\big]$ as the benchmark, measuring approximation accuracy through the absolute error
		\begin{equation*}
			\big|\mathbb{E}\big[\Phi(S_T)\big]-\mathbb{E}\big[\Phi(X_T)\big]\big|.
		\end{equation*}
		}
		
		Figure~\ref{fig-example1-option-price} shows the approximation errors in option pricing for out-of-the-money (OTM), at-the-money (ATM), and in-the-money (ITM) cases under both the OU process and mGBM across different truncation levels. Detailed results are presented in Table~\ref{tab-example1-option-price}.
		\begin{figure}[H]
			\centering
			\subfigure[OU]{
				\includegraphics[width=0.48\textwidth]{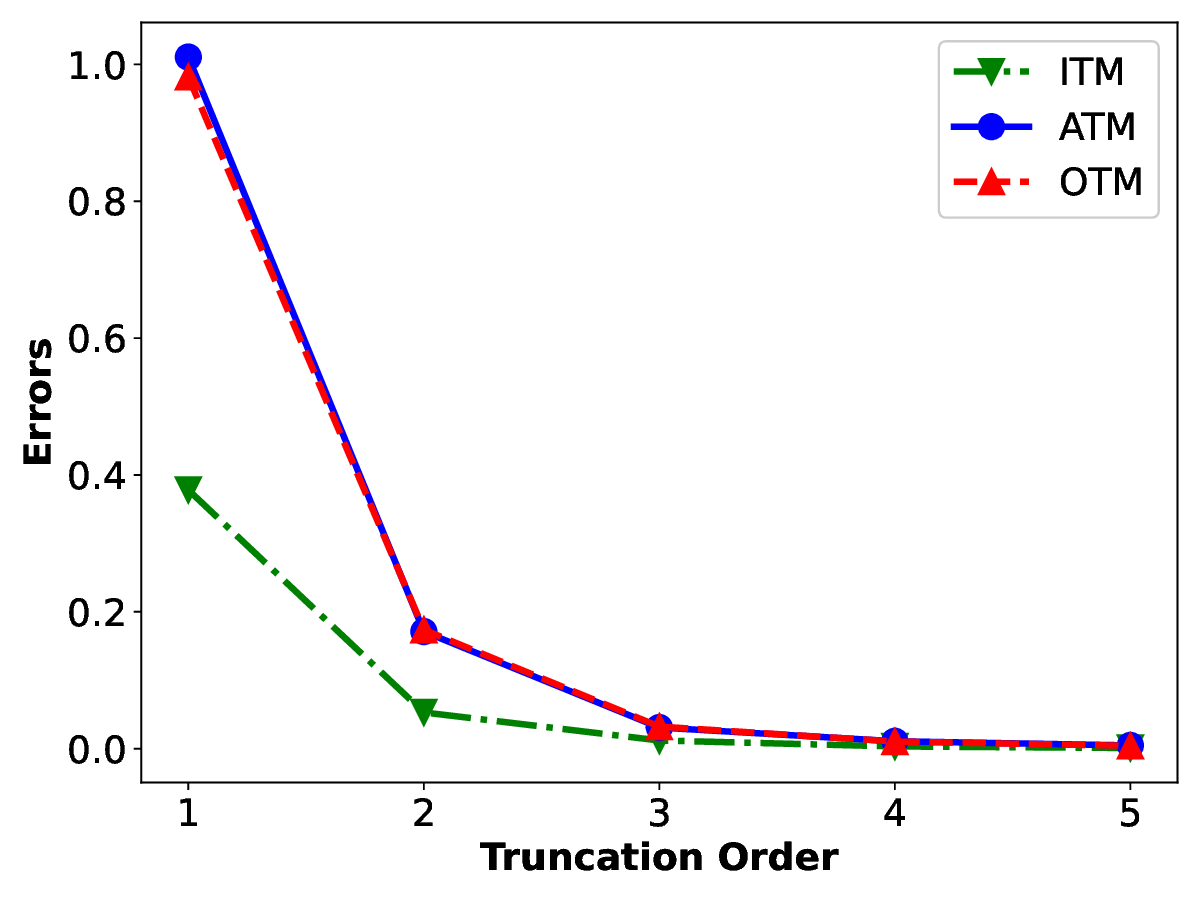}
				\includegraphics[width=0.48\textwidth]{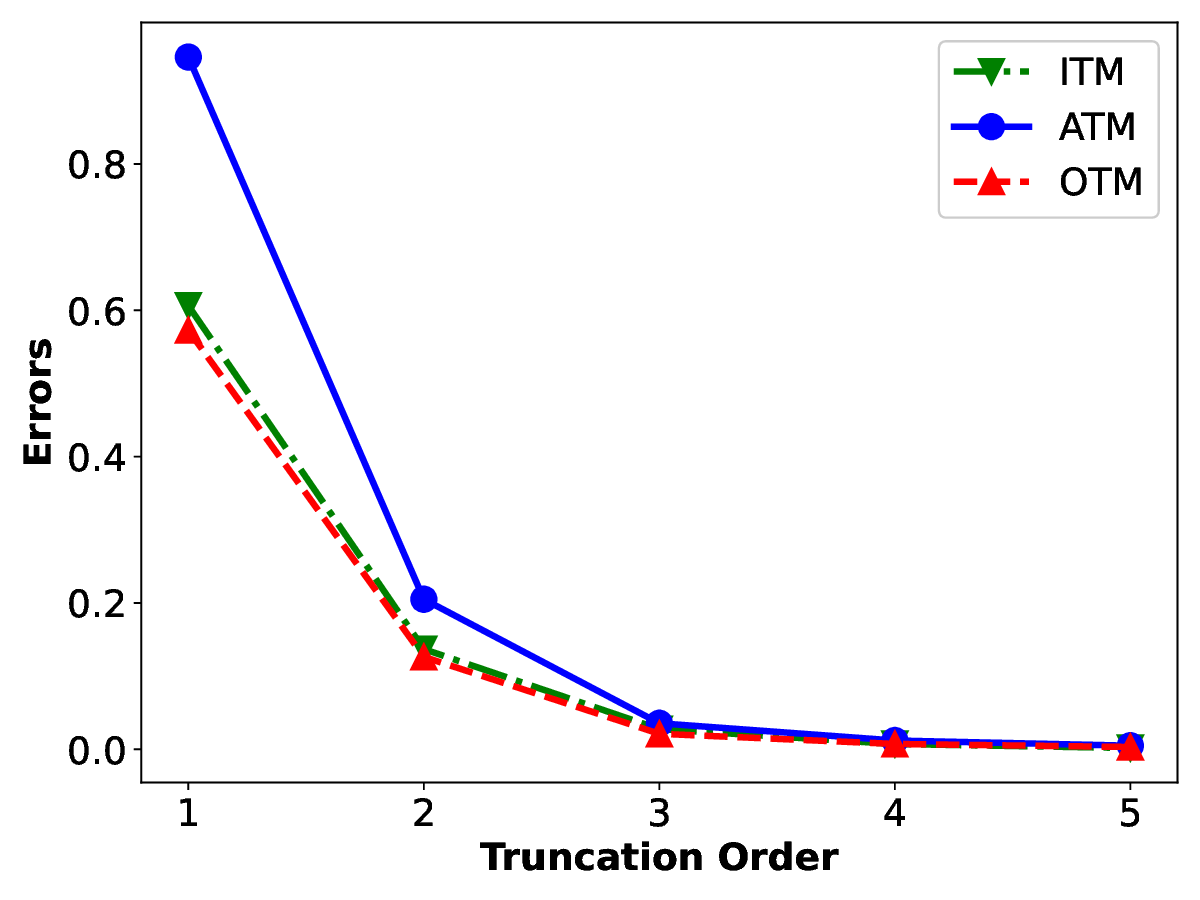}
			}
            \vspace{-0.5cm}
			\subfigure[mGBM]{
				\includegraphics[width=0.48\textwidth]{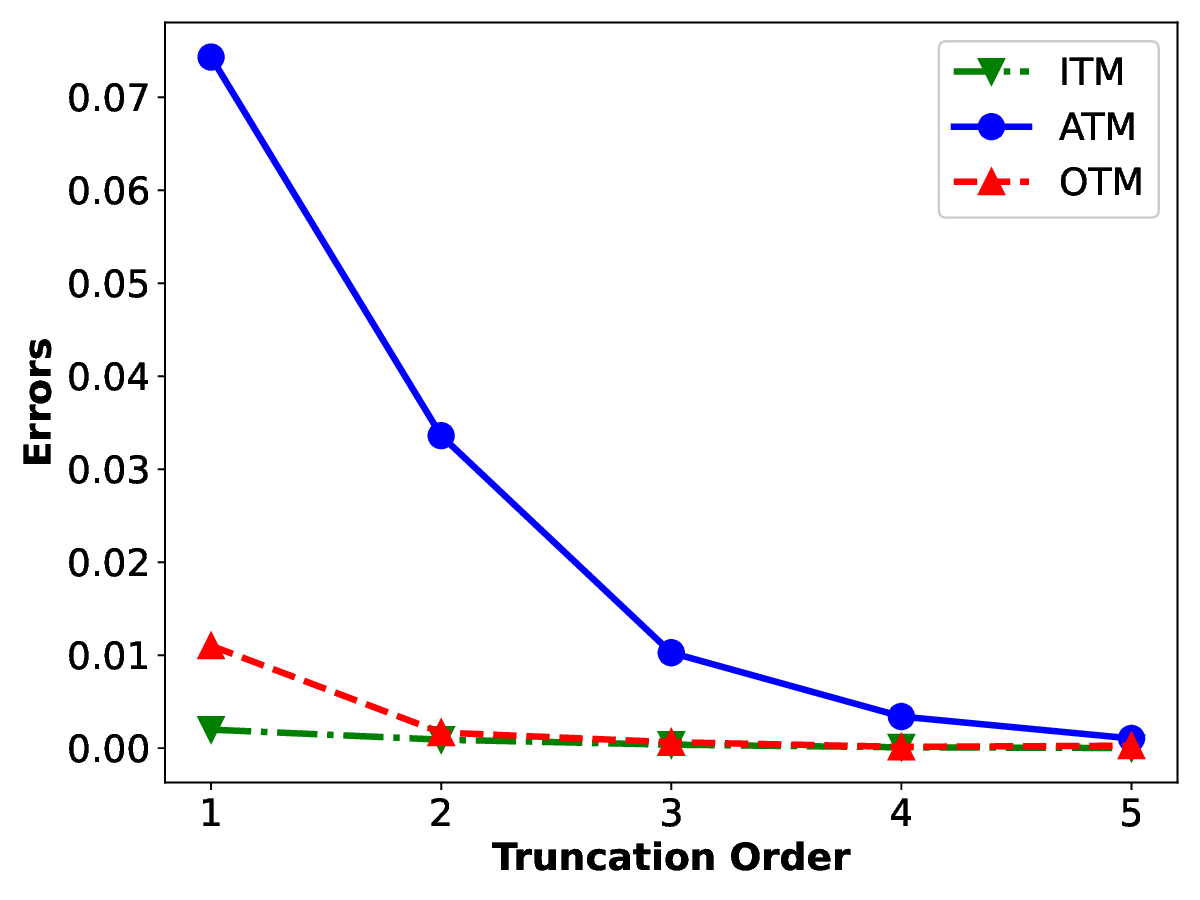}
				\includegraphics[width=0.48\textwidth]{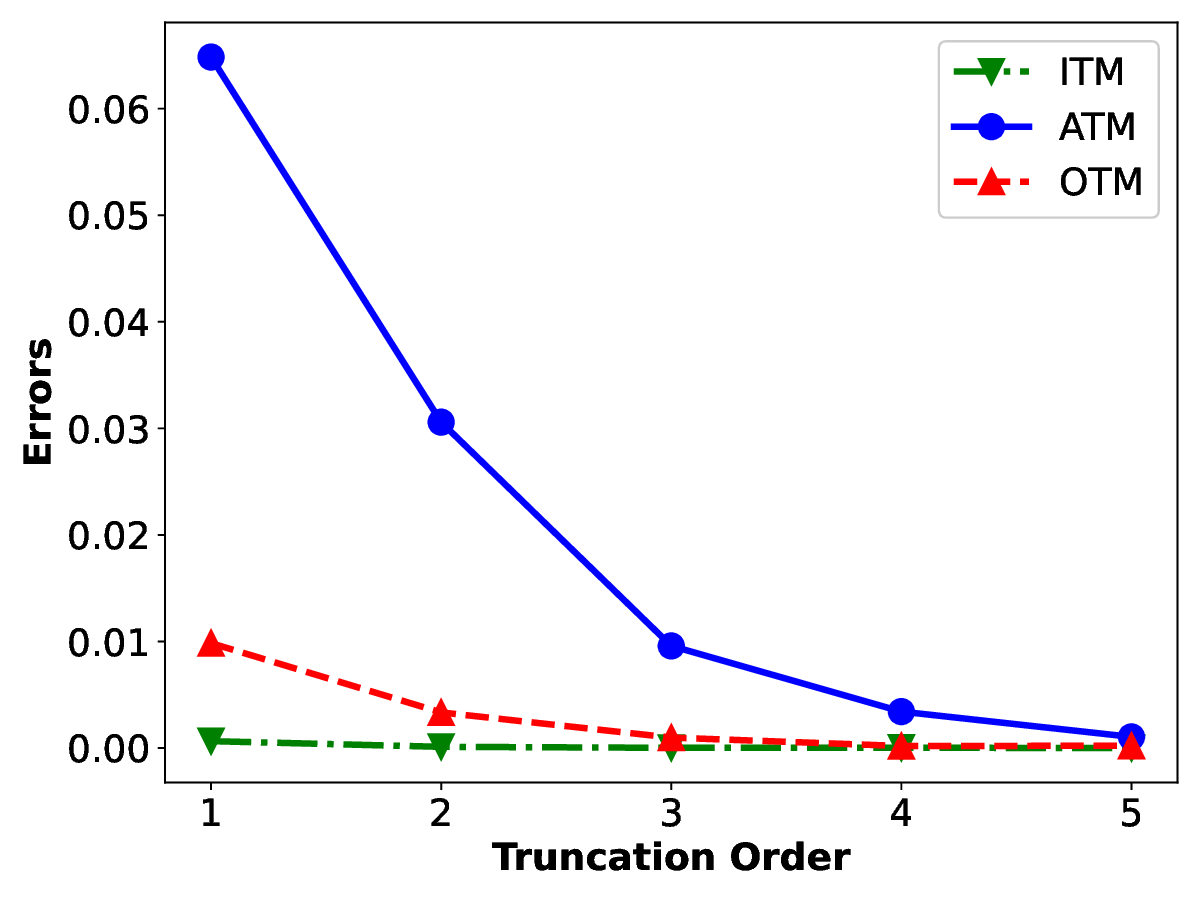}
			}
			\caption{Approximation accuracy for option prices at different moneyness levels. The left column are the errors for SDE~\eqref{eq-RSDE-rep-t-N-discrete} and the right column are the errors for PDE~\eqref{eq-RPDE-rep-t-N-discrete}. The analysis considers three cases: ATM ($x_0=110=K$), ITM ($x_0=95<K$), and OTM ($x_0=115>K$).}
			\label{fig-example1-option-price}
		\end{figure}
		 
		\begin{table}[H]
			\caption{The approximation errors corresponding to Figure~\ref{fig-example1-option-price} for different truncation levels and volatility processes.}
			\label{tab-example1-option-price}
			\smallskip
			\centering
			\renewcommand\arraystretch{1.3}
			\setlength{\tabcolsep}{2.6pt}
			\begin{tabular}{l c c c c c c c}
				\toprule
				\multirow{2}{*}{Model} & \multirow{2}{*}{Truncation Level} & \multicolumn{3}{c}{Errors obtained from SDE~\eqref{eq-RSDE-rep-t-N-discrete}} & \multicolumn{3}{c}{Errors obtained from PDE~\eqref{eq-RPDE-rep-t-N-discrete}}\\
				\cmidrule{3-8}
				&& ITM & ATM & OTM & ITM & ATM & OTM \\
				\midrule
				\multirow{5}{*}{OU}
				& $N=1$ & 3.78e-1 & 1.01e+0 & 9.82e-1 & 6.06e-1 & 9.46e-1 & 5.73e-1\\
				& $N=2$ & 5.30e-2 & 1.71e-1 & 1.74e-1 & 1.37e-1 & 2.05e-1 & 1.26e-1\\
				& $N=3$ & 1.20e-2 & 3.09e-2 & 3.23e-2 & 2.76e-2 & 3.56e-2 & 2.12e-2\\
				& $N=4$ & 3.26e-3 & 1.09e-2 & 1.05e-2 & 7.25e-3 & 1.20e-2 & 7.36e-3\\
				& $N=5$ & 9.44e-4 & 4.79e-3 & 4.16e-3 & 1.76e-3 & 4.95e-3 & 3.17e-3\\
				\hline
				\multirow{5}{*}{mGBM}
				& $N=1$ & 2.00e-3 & 7.43e-2 & 1.10e-2 & 6.49e-4 & 6.48e-2 & 9.84e-3\\
				& $N=2$ & 9.14e-4 & 3.36e-2 & 1.68e-3 & 1.17e-4 & 3.06e-2 & 3.35e-3\\
				& $N=3$ & 3.73e-4 & 1.03e-2 & 6.39e-4 & 2.20e-5 & 9.59e-3 & 9.83e-4\\
				& $N=4$ & 9.20e-5 & 3.40e-3 & 1.43e-4 & 2.90e-5 & 3.42e-3 & 1.94e-4\\
				& $N=5$ & 1.40e-5 & 1.04e-3 & 2.66e-4 & 4.00e-6 & 1.05e-3 & 2.18e-4\\
				\bottomrule
			\end{tabular}
		\end{table}	
        We have four key observations: (1) Both numerical schemes show monotonically decreasing errors with higher truncation levels; \cpurple{(2) The accuracy for mGBM consistently outperforms that for OU across all scenarios; (3) The approximation accuracy is sensitive to moneyness, generally producing the largest errors for ATM options; (4) The SDE and PDE methods perform comparably overall. }
		
		\subsection{Deep learning schemes for linear and nonlinear signature representations}
		As discussed in Section~\ref{sec-problem}, we employ neural networks $(\mathcal{N}^{\mathbf{v},N},\mathcal{N}^{\mathbf{I},N})$ as the nonlinear signature approximation of $(\mathbf{v},\mathbf{I})$.
		Specifically, for each time step $j$, we define the map $\mathcal{N}^{\mathbf{v}}:\big(j,\widehat{\mathbb{W}}_j^N;\phi_{\mathbf{v}}\big)\mapsto \widehat{\mathbf{v}}_j$ to directly learn $\mathbf{v}_j$, and its training problem reads
		\begin{equation*}
			\min\limits_{\phi_{\mathbf{v}}}\frac{1}{J}\sum\limits_{j=0}^J\big(\mathcal{N}^{\mathbf{v},N}_j-\mathbf{v}_j\big)^2.
		\end{equation*}
		The architecture of $\mathcal{N}^{\mathbf{v},N}$ comprises a fully connected feedforward neural network that has 5 hidden layers with 32 neurons each and uses ReLU activation functions. 
		\cpurple{
		Similarly, we define the neural network $\mathcal{N}^{\mathbf{I}}:\big(j,\widehat{\mathbb{W}}_j^N;\,\phi_{\mathbf{I}}\big)\mapsto \widehat{\mathbf{I}}_j$, where $\widehat{\mathbf{I}}_j$ is the neural network output to learn $\mathbf{I}_j$. 
		}
		We further give the discrete truncated SDE and its associated PDE as follows
		\cpurple{
		\begin{align}
			\label{eq-nn-SDE}
			\widehat{X}_{j+1}^N
			=&\widehat{X}_j^N
			+\widehat{\bs{q}}_j^{f,N}\Delta t
			+f(j,\widehat{X}^N_j)\langle\nabla\mathcal{N}_j^{\mathbf{I},N},\pi_N(\widehat{\mathbb{W}}_j^N\otimes e_2)\rangle(W_{j+1}-W_j)\\
			&+g(j,\widehat{X}^N_j)\mathcal{N}_j^{\mathbf{v},N}(B_{j+1}-B_j),\nno
		\end{align}
		\begin{flalign}
			\label{eq-nn-PDE}
			-(\widehat{u}_{j+1}^l-\widehat{u}_j^l) =&\frac{\Delta t}{2}(\widehat{\mathcal{L}}_j^l+\widehat{\mathcal{L}}_{j+1}^l)+\delta_x\widehat{u}_{j+1}^lf(t_{j+1},x_l)\langle\nabla\mathcal{N}_{j+1}^{\mathbf{I},N},\pi_N(\widehat{\mathbb{W}}_{j+1}^N\otimes e_2)\rangle(W_{j+1}-W_j).
		\end{flalign}
		The operator reads
		\begin{align*}
			\widehat{\mathcal{L}}_j^l
			=&\delta_x\widehat{u}_j^l\big[\widehat{\bs{q}}_j^{f,N}-f_0(t_j,x_l)\langle\nabla\mathcal{N}_j^{\mathbf{I},N},\pi_N(\widehat{\mathbb{W}}_j^N\otimes e_2)\rangle^2\big]\\
			&+\frac{1}{2}\delta^2_{xx}\widehat{u}_j^l\big[ g^2(t_j,x_l)(\mathcal{N}_j^{\mathbf{v},N})^2-f^2(t_j,x_l)\langle\nabla\mathcal{N}_j^{\mathbf{I},N},\pi_N(\widehat{\mathbb{W}}_j^N\otimes e_2)\rangle^2\big].
		\end{align*}
		}
		
		For comparison, we also introduce the neural network-based linear signature representations for these non-Markovian volatility processes.
		Since the coefficients of linear signature representations for complex volatility processes are implicit, we introduce another neural network $\mathcal{N}^{\bs{\ell},N}$ with parameters $\phi_{\bs{\ell}}$.
		It shares the same architecture as $\mathcal{N}^{\mathbf{v},N}$ but takes different outputs.
		Specifically, for each time step $j$, we have $\mathcal{N}^{\bs{\ell}}:\big(j,\widehat{\mathbb{W}}_j^N;\,\phi_{\bs{\ell}}\big)\mapsto\pi_N(\widehat{\bs{\ell}}_j)$ to learn the coefficients $\pi_N(\bs{\ell}_j)$ for $\lim\limits_{N\to\infty}\langle\pi_N(\bs{\ell}_j),\widehat{\mathbb{W}}_j^N\rangle=\mathbf{v}_j$.
		And the training problem can be defined as follows
		\begin{equation}
			\min\limits_{\phi_{\bs{\ell}}}\frac{1}{J}\sum\limits_{j=0}^J\Big(\big\langle\mathcal{N}^{\bs{\ell}}\big(j,\widehat{\mathbb{W}}_j^N;\,\phi_{\bs{\ell}}\big),\widehat{\mathbb{W}}_j^N\big\rangle-\mathbf{v}_j\Big)^2.
		\end{equation}
		\cpurple{
		Similarly, we define the neural network $\mathcal{N}^{\bs{p}}:\big(j,\widehat{\mathbb{W}}_j^N;\,\phi_{\bs{p}}\big)\mapsto\pi_N(\widehat{\bs{p}}_j)$, where $\pi_N(\widehat{\bs{p}}_j)$ is the neural network output to learn $\pi_N(\bs{p}_j)$. 
		}
		In the following example, we will employ these two neural networks to deal with option pricing problems under non-Markovian volatility processes.
		
		\begin{example}[Non-Markovian volatility processes with nonlinear and linear signature representations]
		\label{example-non-Markovian}
			\cpurple{
			In this example, we first use the previously introduced neural networks $(\mathcal{N}^{\mathbf{v},N},\mathcal{N}^{\mathbf{I},N})$ and $(\mathcal{N}^{\bs{\ell},N},\mathcal{N}^{\bs{p},N})$ to represent the rough pair $(\mathbf{v},\mathbf{I})$ of non-Markovian processes $(v,I)$, which denote the nonlinear and linear signature representations, respectively.}
			We then incorporate these signature representations into the corresponding SDE and PDE to compute European put option prices as in Example~\ref{example-Markovian}.
			
			For implementation details, we consider the Riemann-Liouville rHeston volatility process $v$, as well as the rBergomi process, from Table~\ref{tab-example-vol-processes}.
			We set the constants as follows
			\[
				\kappa=0.1,\;\theta=0.25,\;\sigma=0.01,\;\eta=1,\;v_0=0.1,\;\alpha=0.2,
\;\rho=-0.4,\;\beta=0.6,\;K=110,\;T=1.
			\]
			Moreover, we keep the other settings consistent with Example~\ref{example-Markovian}.
		\end{example}
			
			Let $(\widehat{\mathbf{v}}^{\text{nonlinear}},\widehat{\mathbf{I}}^{\text{nonlinear}})$ and $(\widehat{\mathbf{v}}^{\text{linear}},\widehat{\mathbf{I}}^{\text{linear}})$ denote the nonlinear and linear signature representations from 
            $(\mathcal{N}^{\mathbf{v},N},$\\
            $\mathcal{N}^{\mathbf{I},N})$ and $(\mathcal{N}^{\bs{\ell},N},\mathcal{N}^{\bs{p},N})$, respectively.
			Then we present the MAE $\mathcal{E}$ between the Monte Carlo benchmark $(\mathbf{v},\mathbf{I})$ and its truncated signature representations at different truncation levels for both rHeston and rBergomi processes in Figure~\ref{fig-example2-v-I-MAE}.
			The results show that: (1) \cpurple{while the nonlinear signature representation consistently outperforms its linear counterpart in accuracy across all truncation levels for the rBergomi volatility process, both representations show comparable performance under the rHeston model;} (2) the accuracy of both the nonlinear and linear signature representations increases significantly as the truncation level increases; (3) the accuracy of both the nonlinear and linear signature representations for the rHeston volatility process is significantly better than that for the rBergomi volatility process. However, as we will show next, this difference shrinks when evaluating option pricing errors, as illustrated in Example~\ref{example-Markovian}.

			\begin{figure}[H]
				\centering
				\subfigure[rHeston]{
					\includegraphics[width=0.49\textwidth]{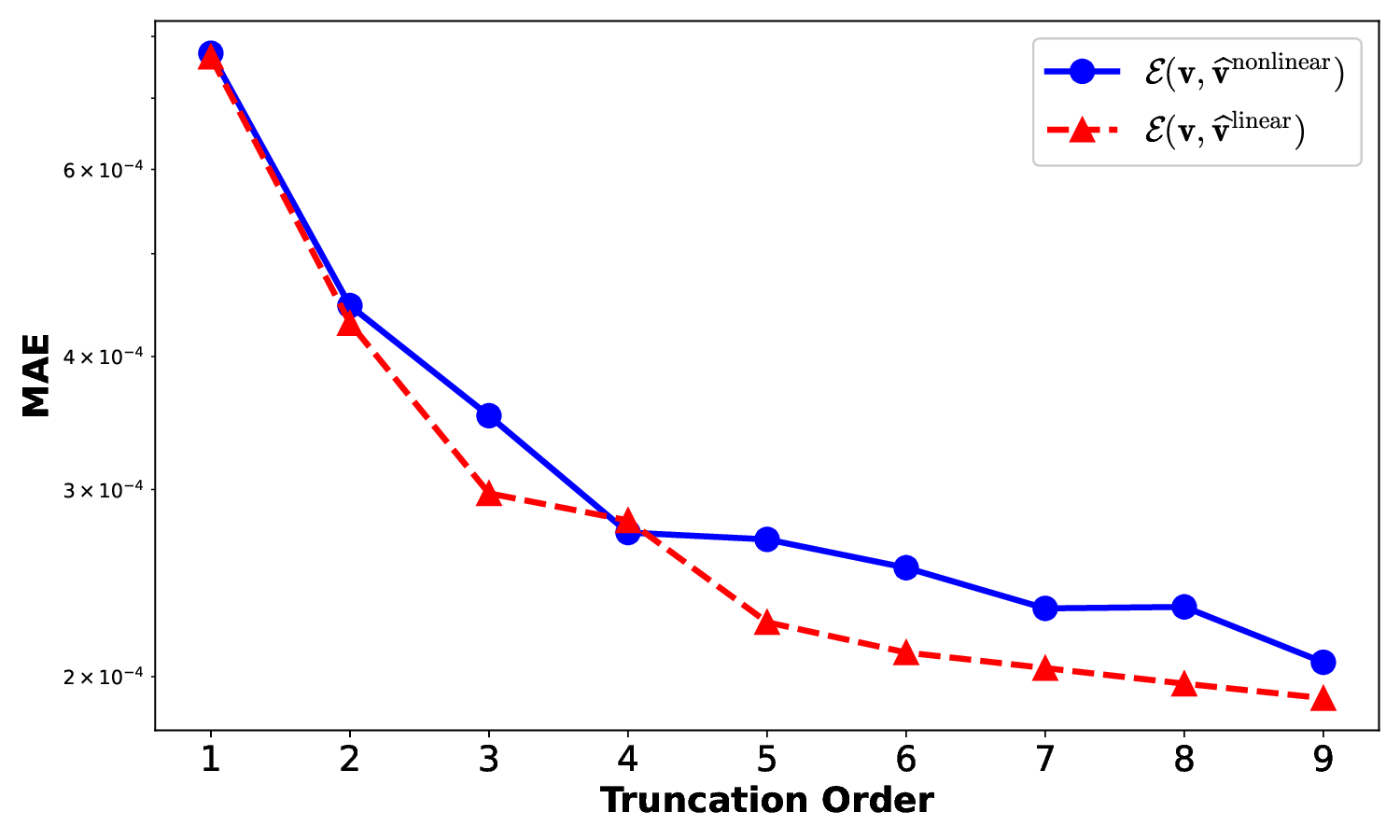}
					\includegraphics[width=0.49\textwidth]{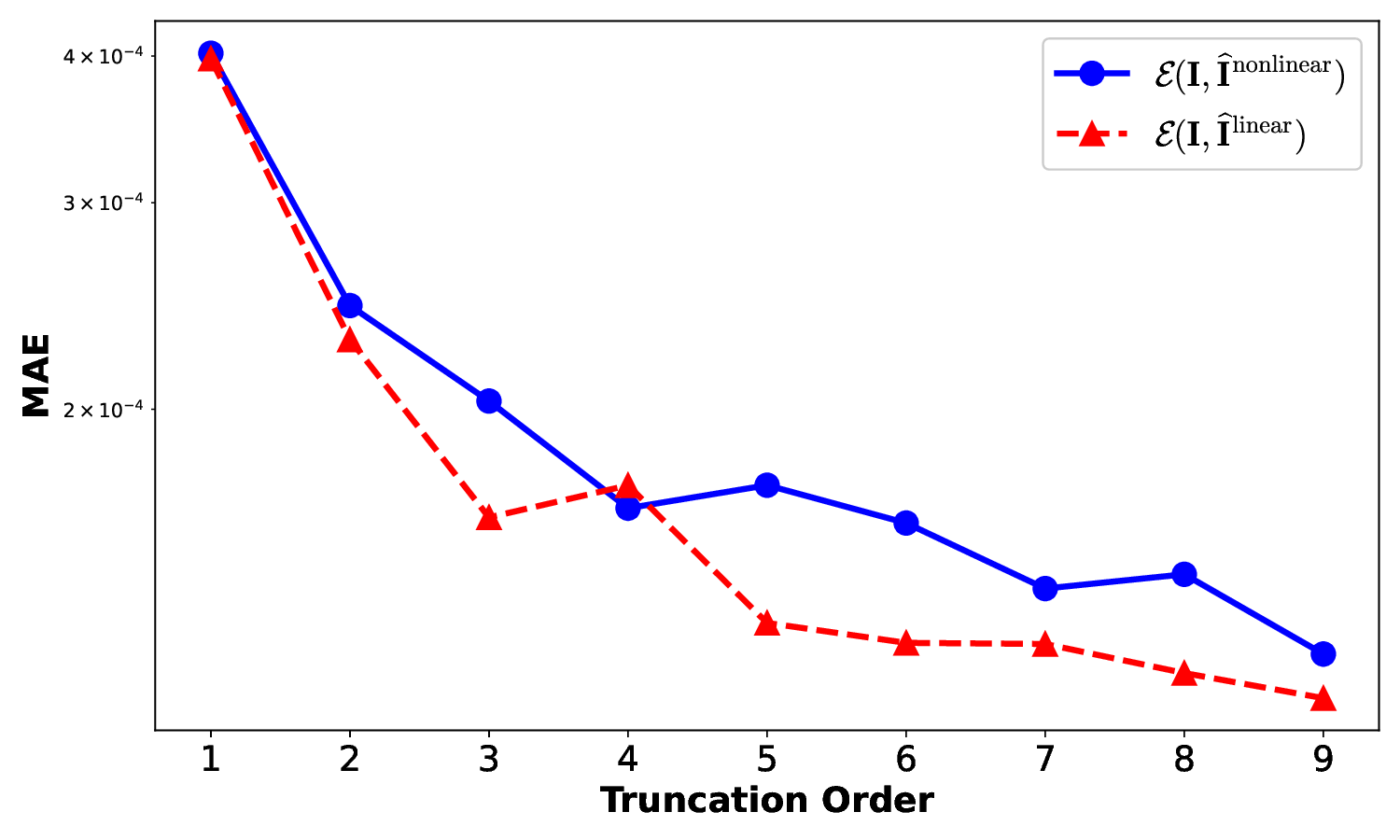}}
				\subfigure[rBergomi]{
					\includegraphics[width=0.49\textwidth]{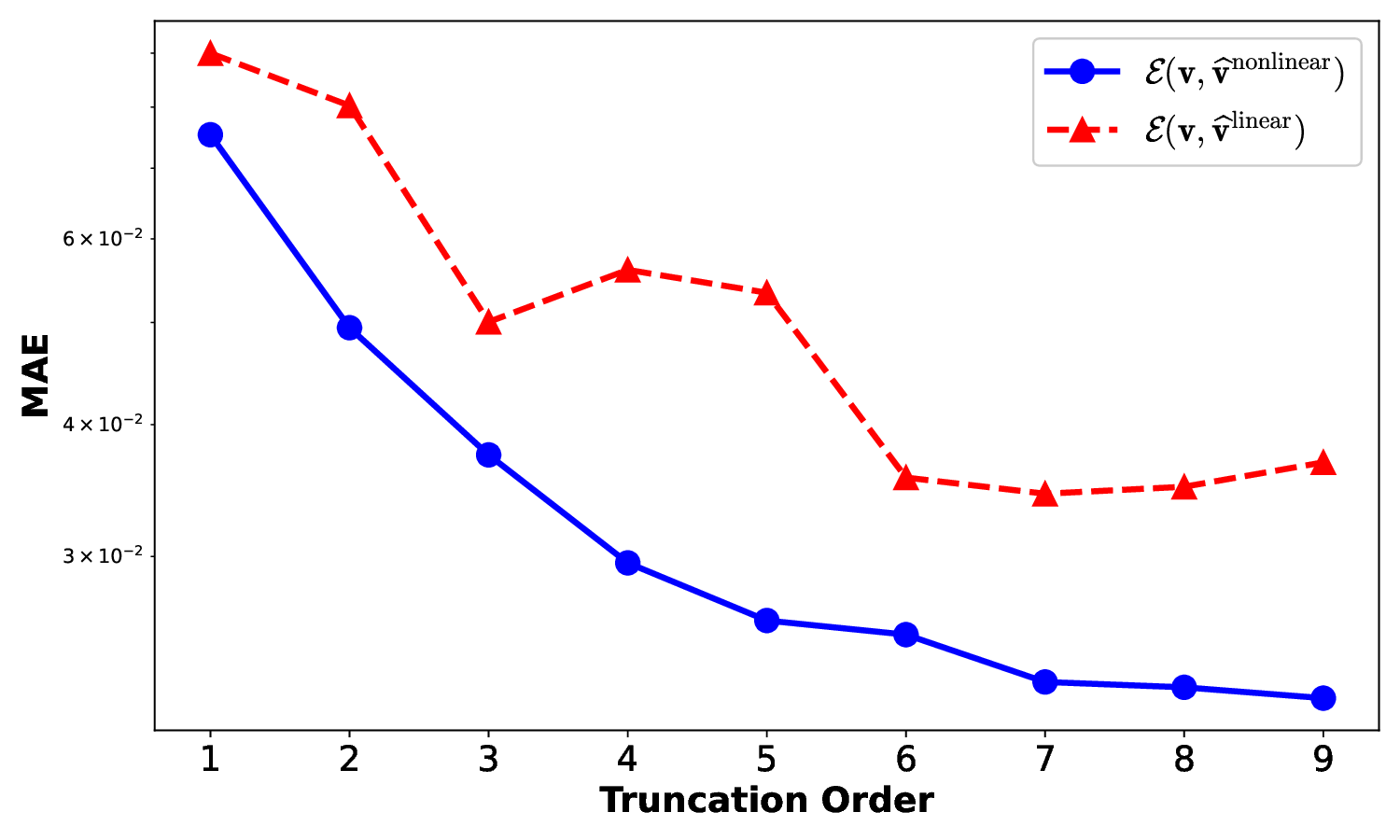}
					\includegraphics[width=0.49\textwidth]{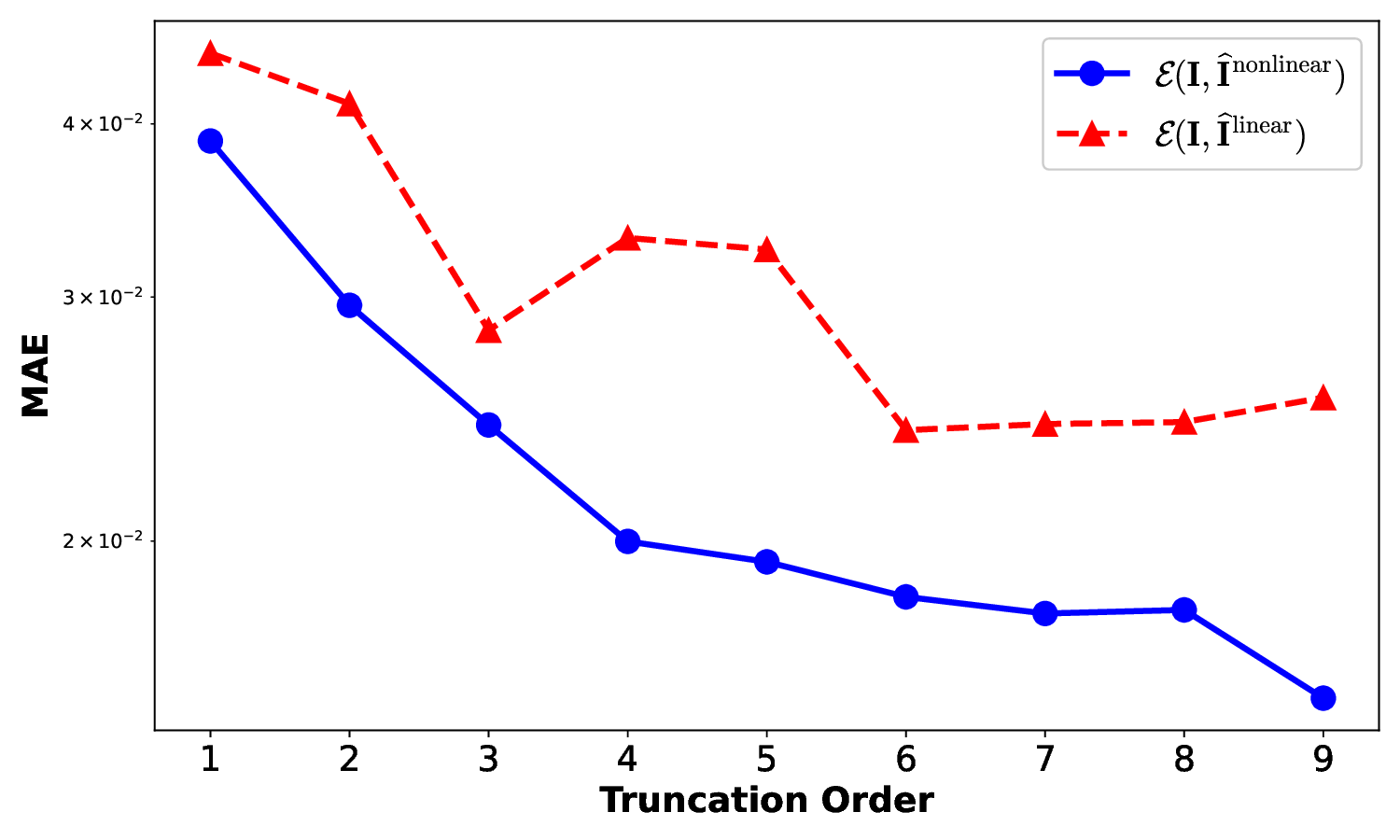}}
				\caption{Accuracy of nonlinear signature representations $(\widehat{\mathbf{v}}^{\text{nonlinear}},\widehat{\mathbf{I}}^{\text{nonlinear}})$ and linear signature representations $(\widehat{\mathbf{v}}^{\text{linear}},\widehat{\mathbf{I}}^{\text{linear}})$ for truncation levels $N=1$ to $9$ under the rHeston and rBergomi process. The left column are the MAEs for $\widehat{\mathbf{v}}^{\text{nonlinear}}$ and $\widehat{\mathbf{v}}^{\text{linear}}$, and the right column are the MAEs for $\widehat{\mathbf{I}}^{\text{nonlinear}}$ and $\widehat{\mathbf{I}}^{\text{linear}}$. The vertical axis uses a logarithmic scale and the results are based on $M=10,000$ samples.}
				\label{fig-example2-v-I-MAE}
			\end{figure}

			Similar to Example~\ref{example-Markovian}, we incorporate both the nonlinear and linear signature representations into the SDE and PDE for European put options pricing.
			\cpurple{
			Moreover, we use $\mathbb{E}\left[\Phi\left(\widehat{X}_T^{\text{nonlinear}}\right)\right]$ and $\mathbb{E}\left[\Phi\left(\widehat{X}_T^{\text{linear}}\right)\right]$ to denote the option prices obtained from the nonlinear and linear signature representations, respectively.
			}
			The results are shown in Figures~\ref{fig-example2-option-price-rHeston} and \ref{fig-example2-option-price-rBergomi} for the rHeston and rBergomi volatility processes, respectively. We observe that the approximation error converges quickly and that a small truncation level, such as $N=3$, is sufficient for practical purposes. The reason is that both the nonlinear and linear signature representations of the pair $(\mathbf{v}, \mathbf{I})$ have achieved sufficiently high accuracy at high truncation orders. Consequently, when these representations are incorporated into the SDE and PDE for option pricing, the errors remain comparable.

        	\begin{figure}[H]
        		\centering
        		\subfigure[SDE under rHeston]{
        			\includegraphics[width=0.29\textwidth]{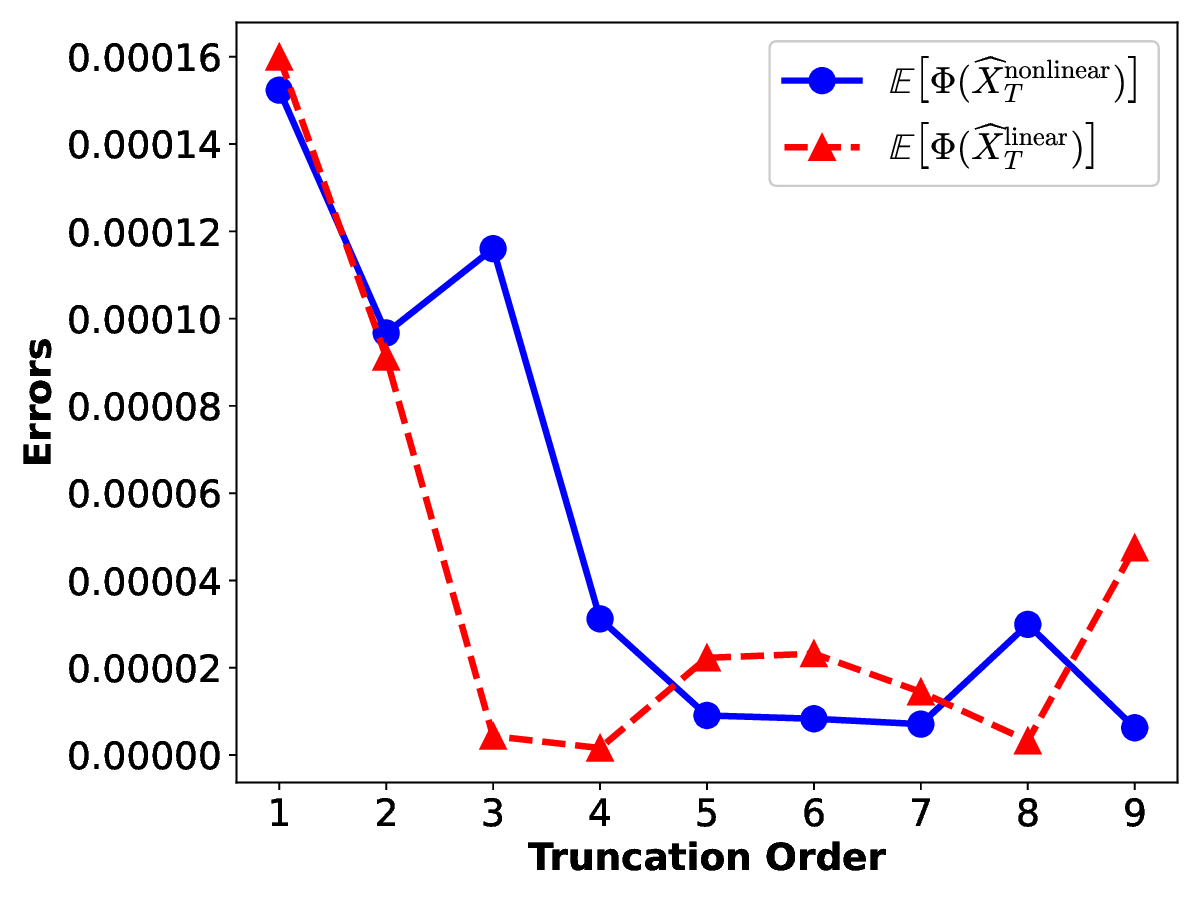}
        			\includegraphics[width=0.29\textwidth]{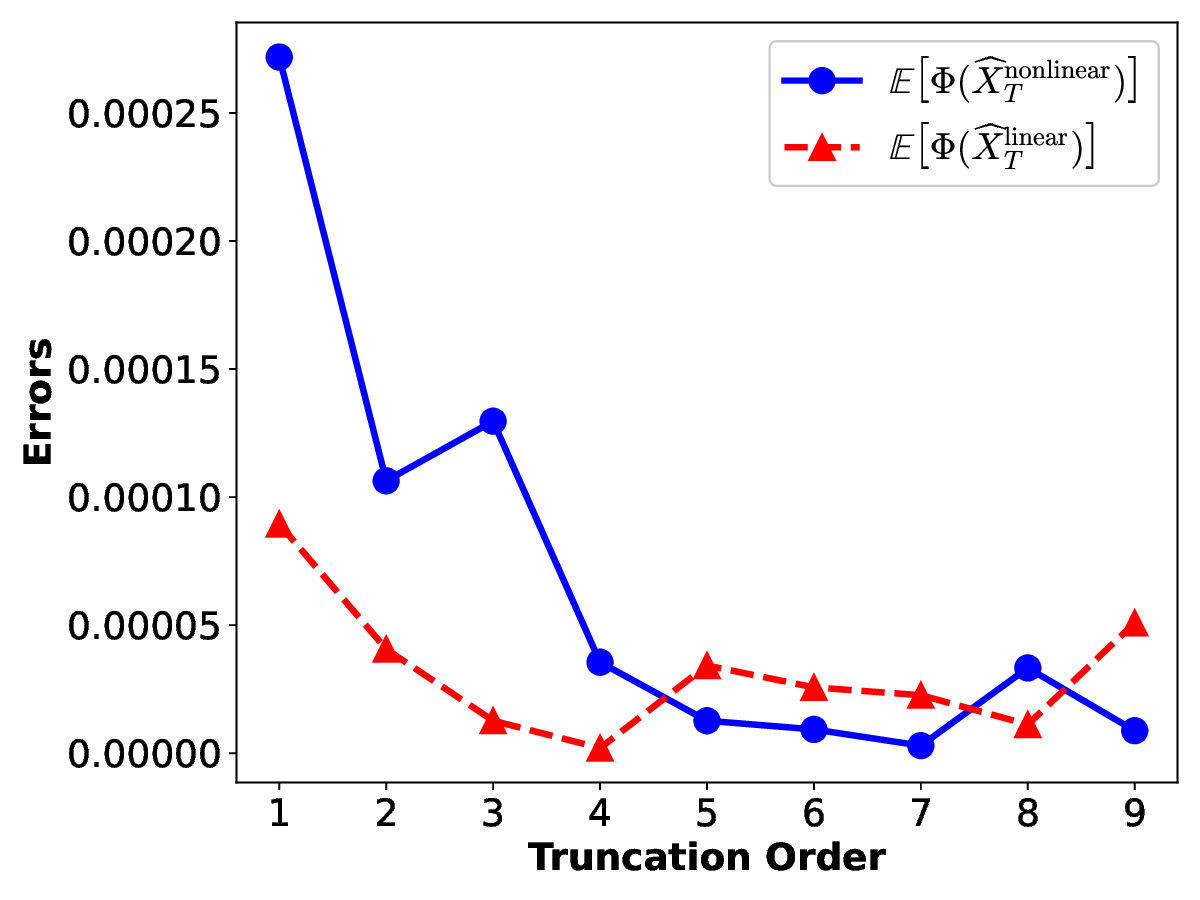}
        			\includegraphics[width=0.29\textwidth]{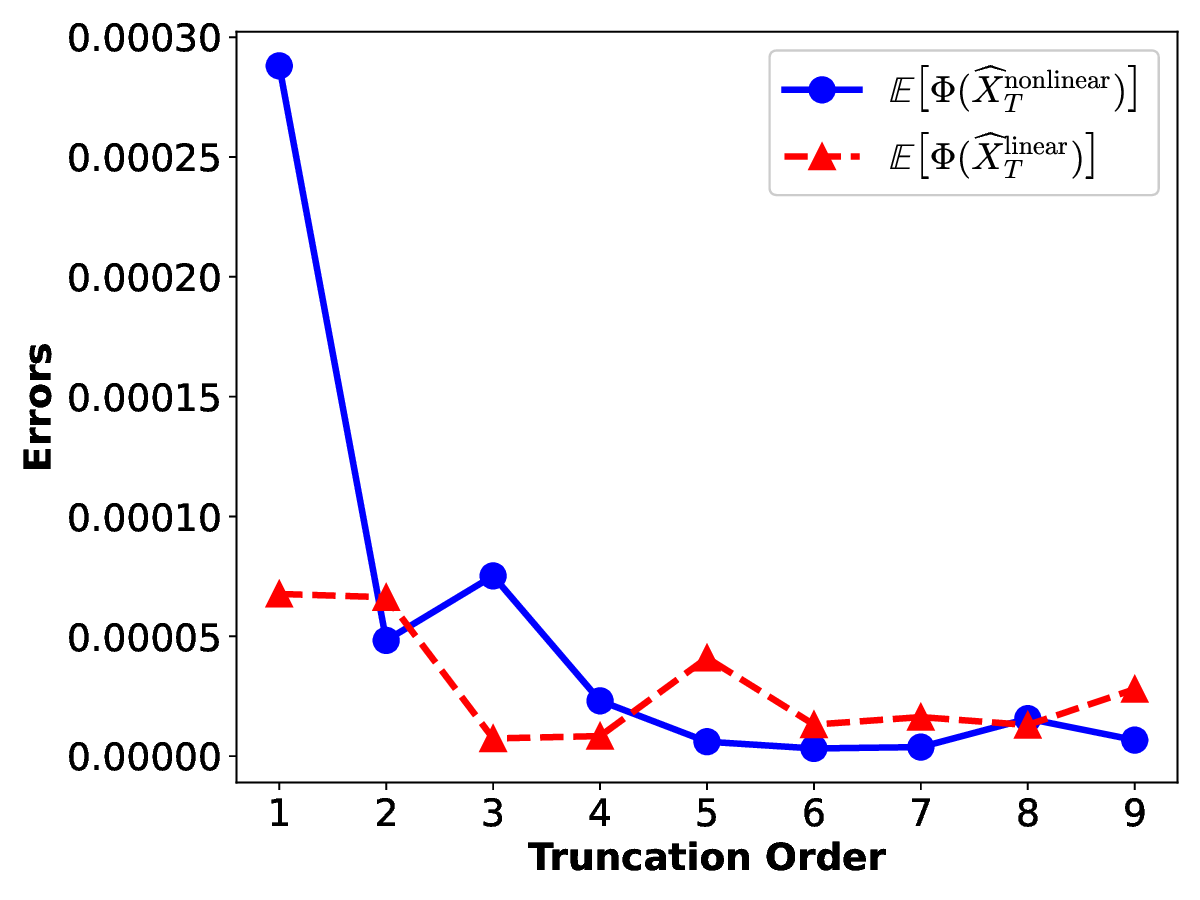}
        		}
        		\subfigure[PDE under rHeston]{
        			\includegraphics[width=0.29\textwidth]{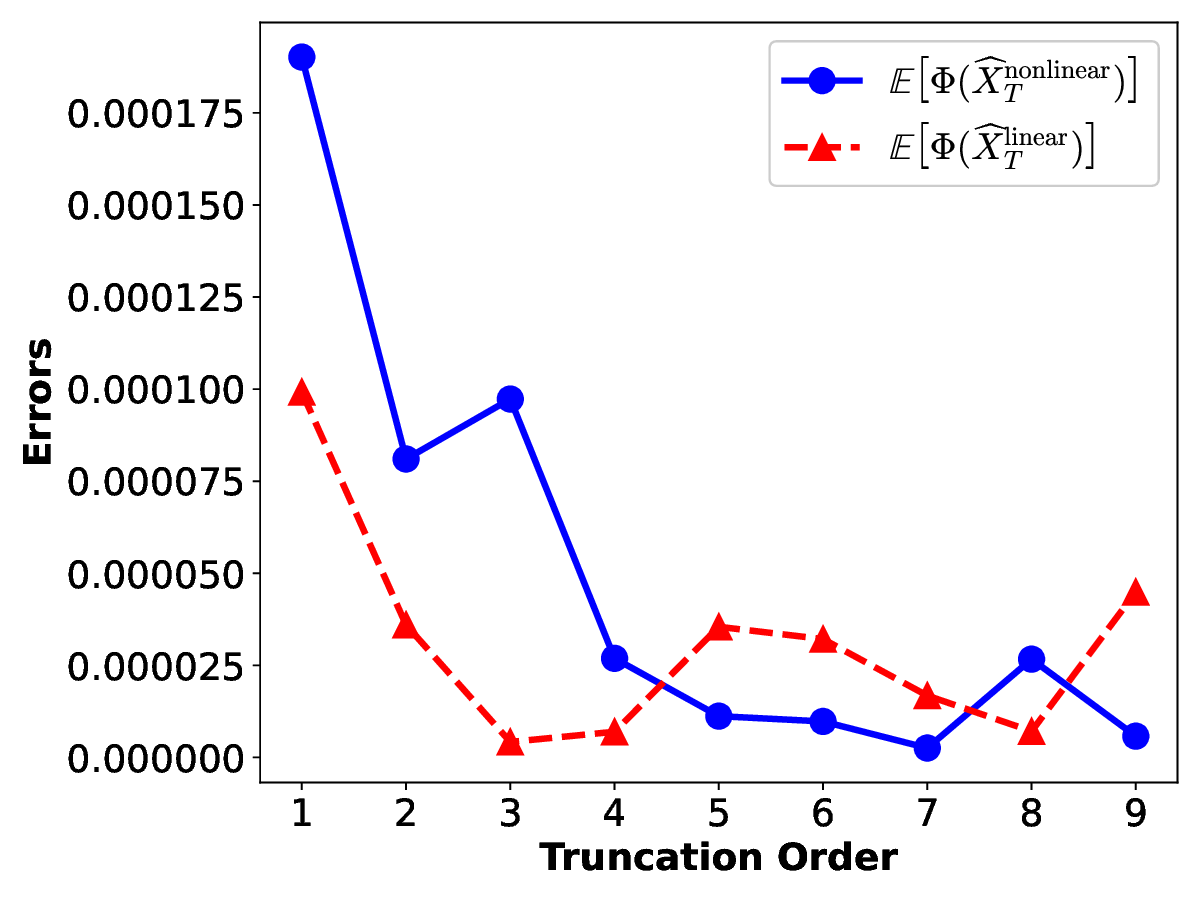}
        			\includegraphics[width=0.29\textwidth]{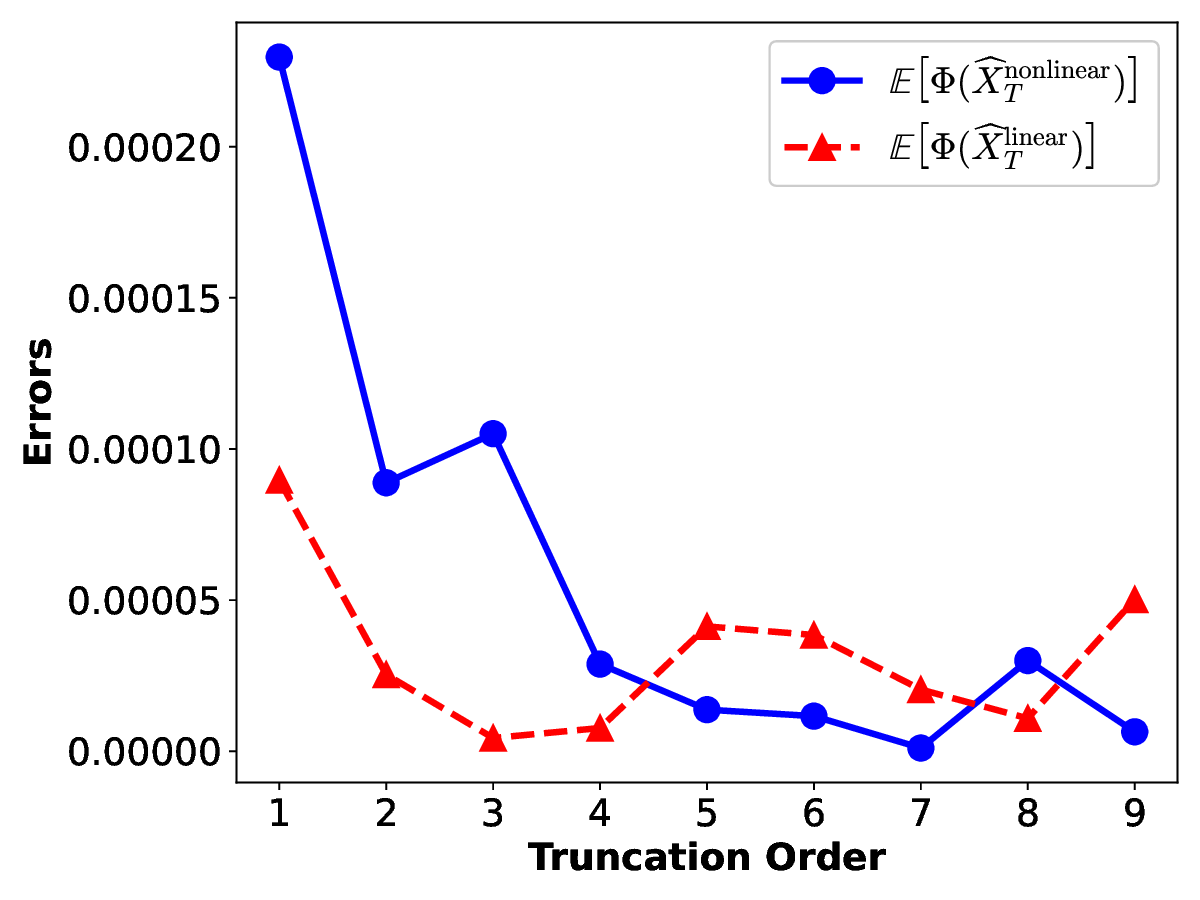}
        			\includegraphics[width=0.29\textwidth]{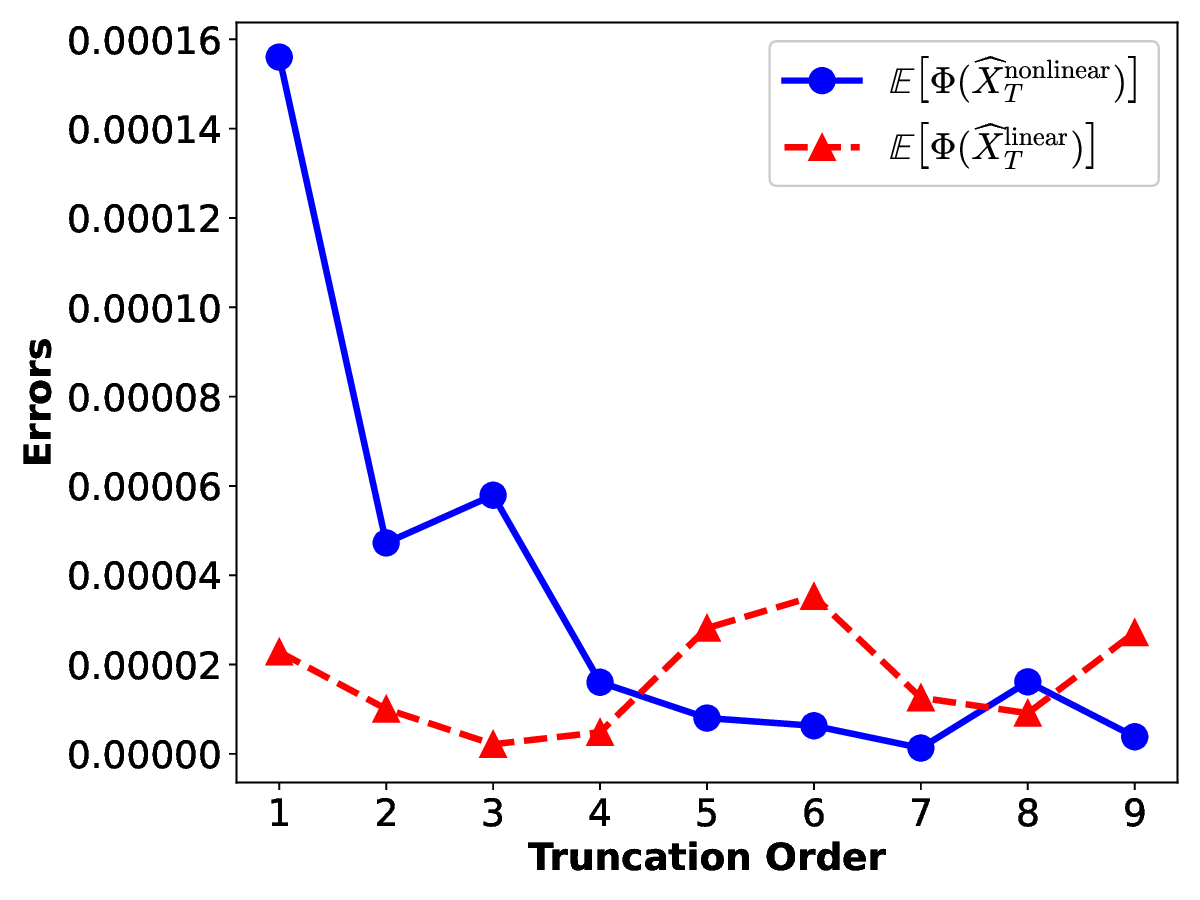}
        		}
        		\caption{Approximation error obtained from SDE and PDE for European put option prices under rHeston volatility processes at different moneyness levels. The left column is for OTM ($x_0=115>K$), the middle column is for ATM ($x_0=110=K$), and the right column is for ITM ($x_0=105<K$).}
        		\label{fig-example2-option-price-rHeston}
        	\end{figure}
    		
    		\begin{figure}[H]
    			\centering
    			\subfigure[SDE under rBergomi]{
    				\includegraphics[width=0.29\textwidth]{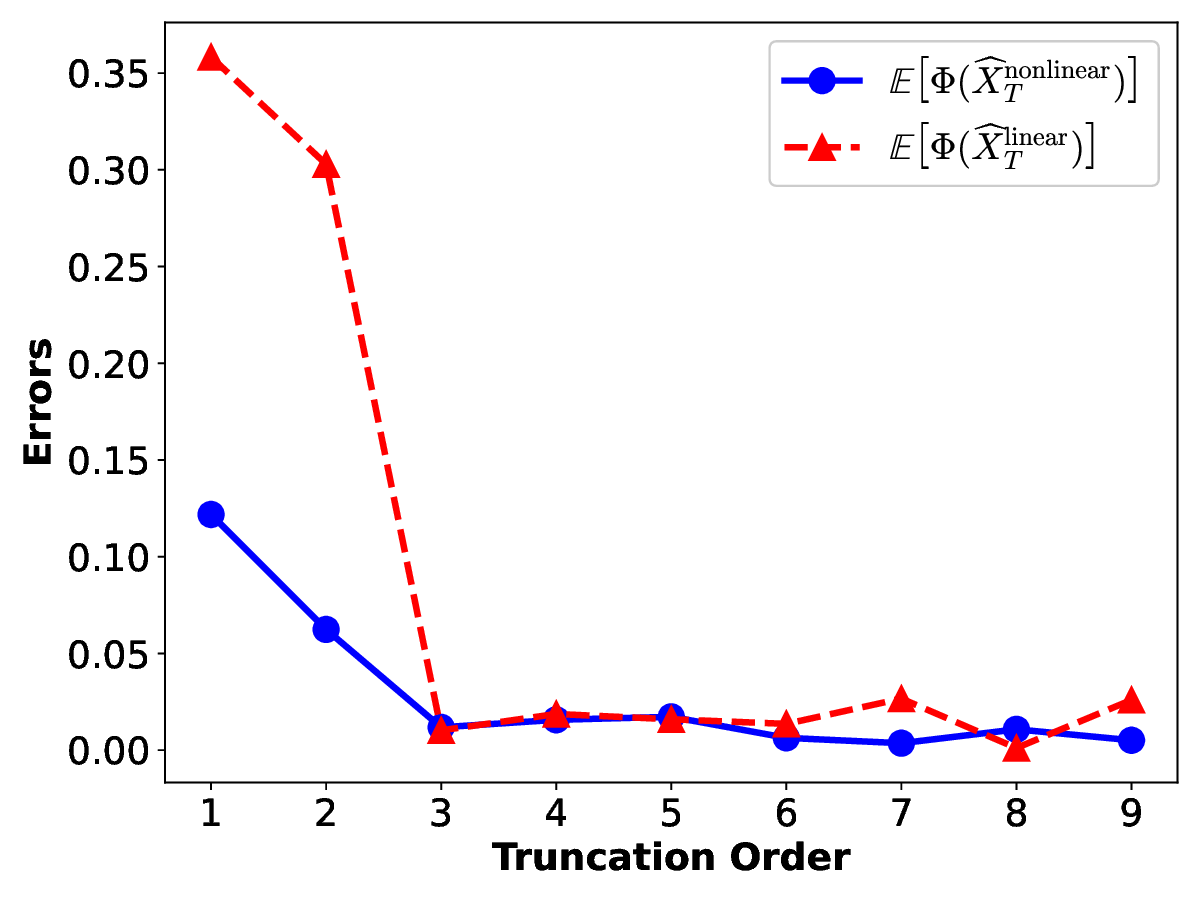}
    				\includegraphics[width=0.29\textwidth]{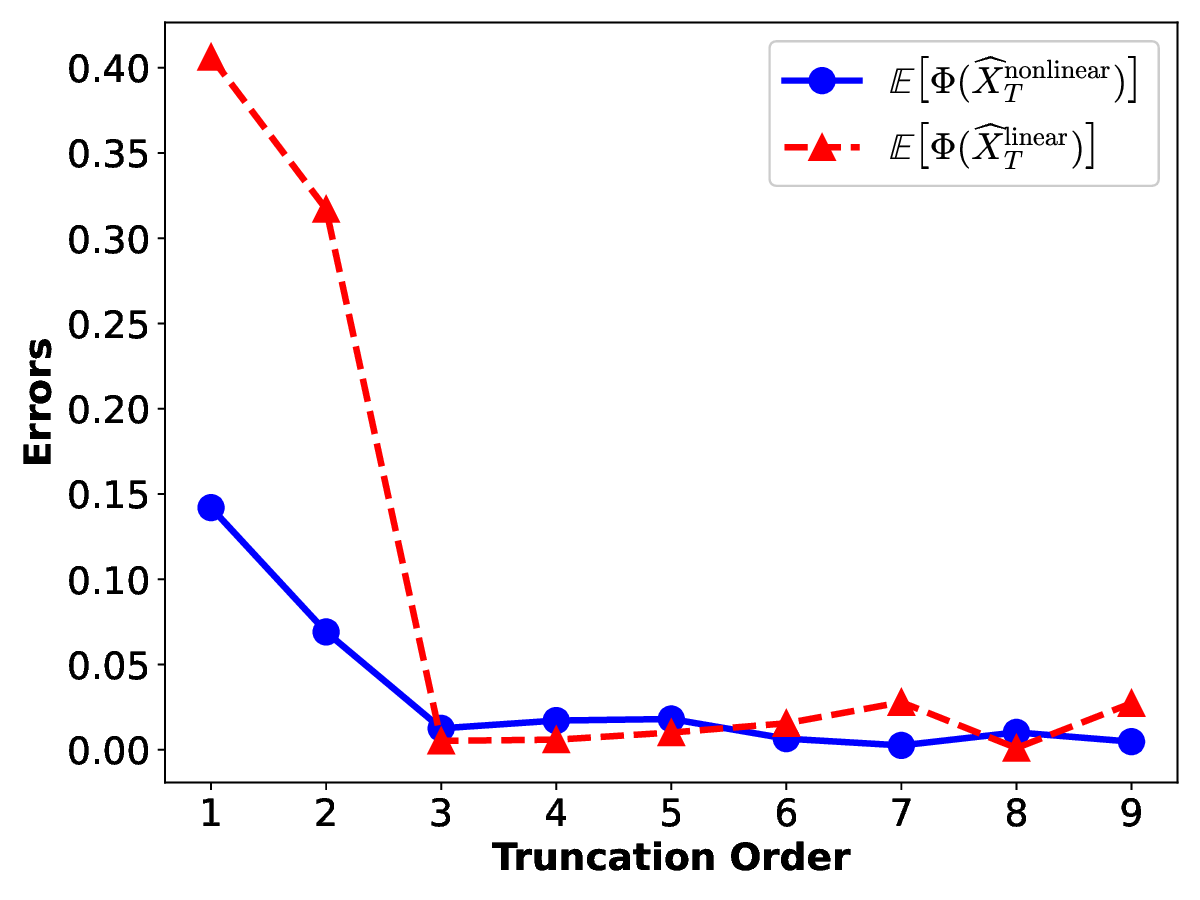}
    				\includegraphics[width=0.29\textwidth]{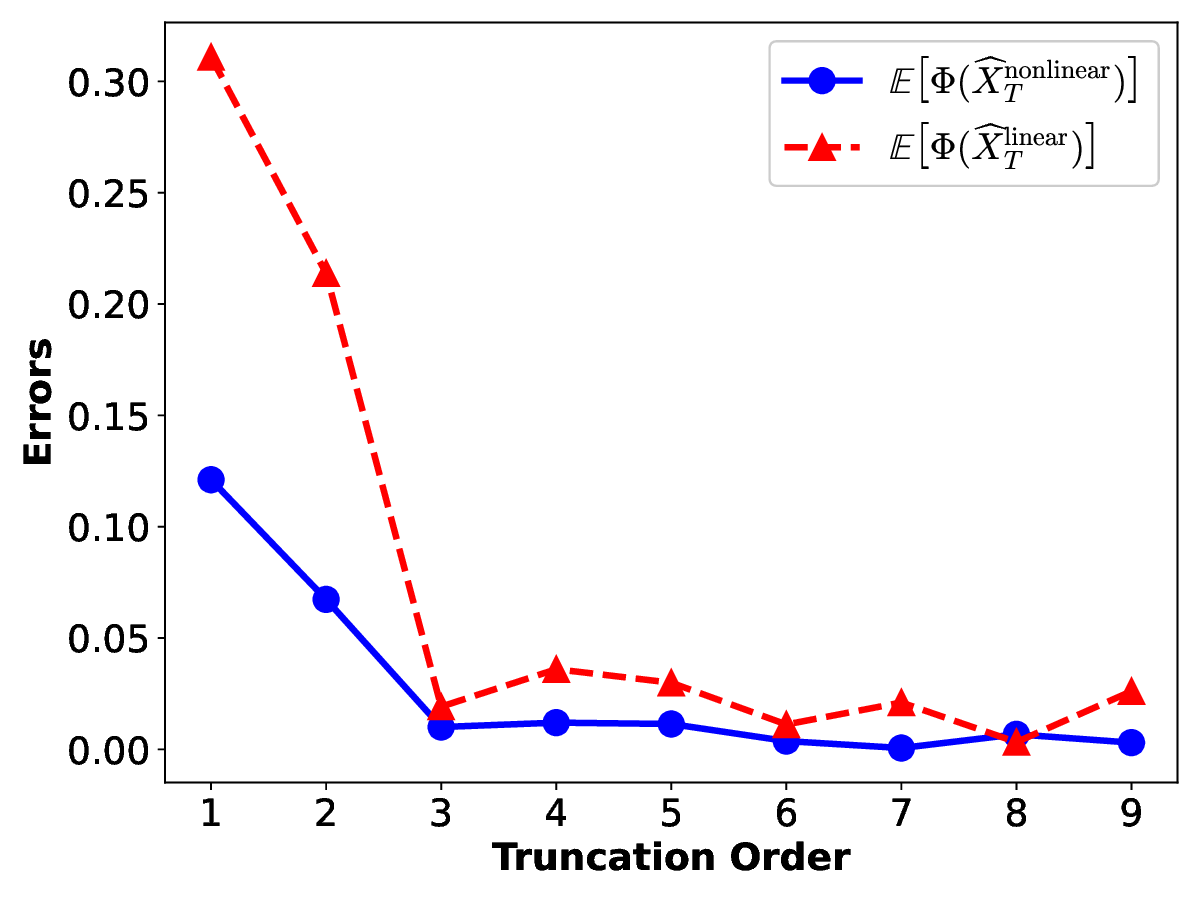}
    			}
    			\subfigure[PDE under rBergomi]{
    				\includegraphics[width=0.29\textwidth]{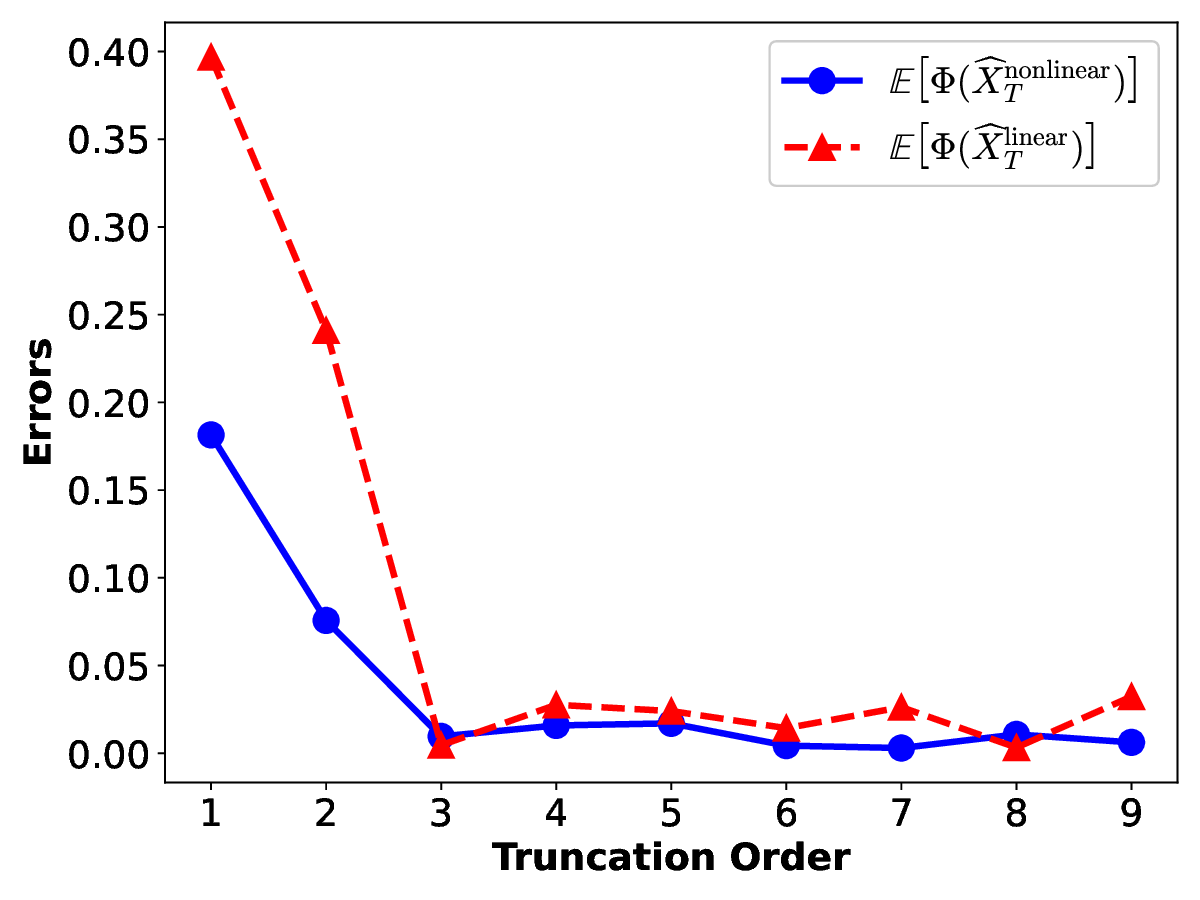}
    				\includegraphics[width=0.29\textwidth]{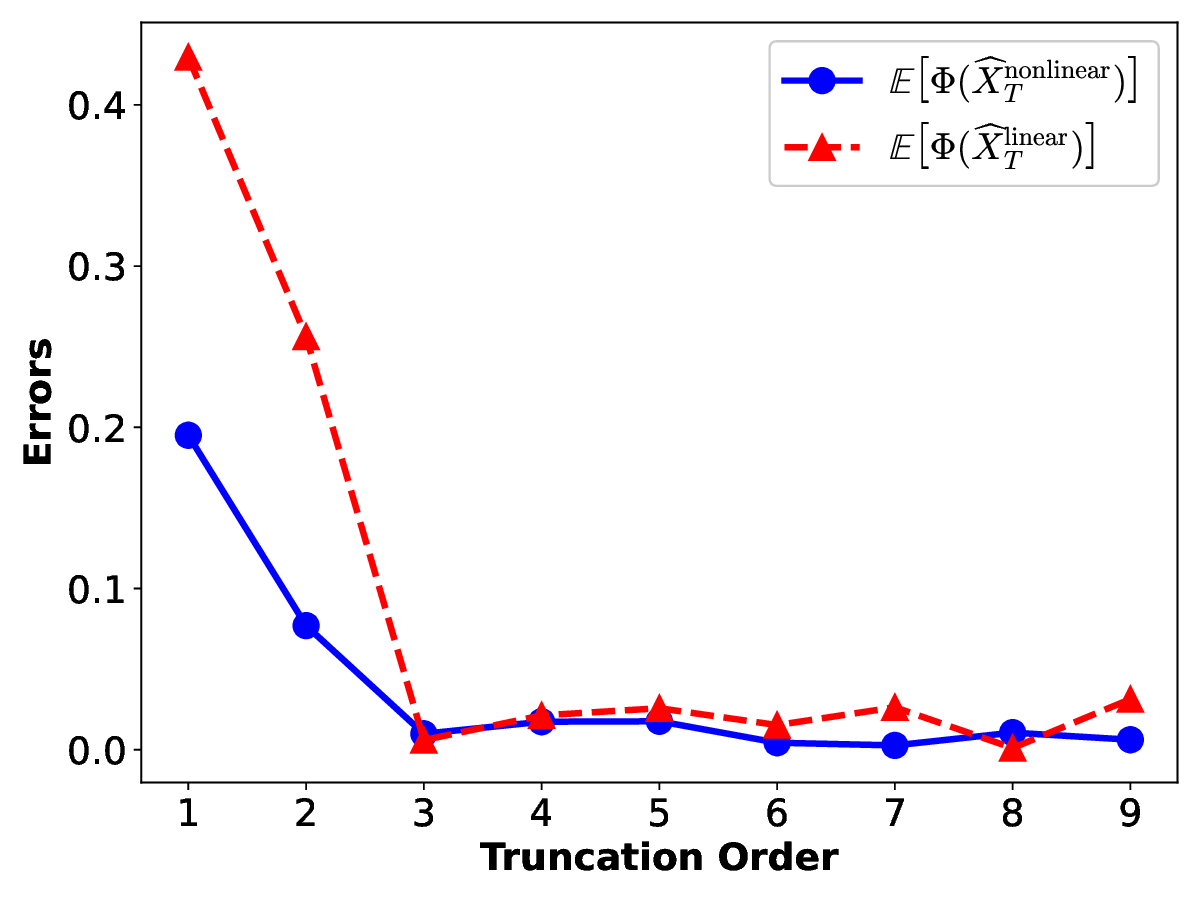}
    				\includegraphics[width=0.29\textwidth]{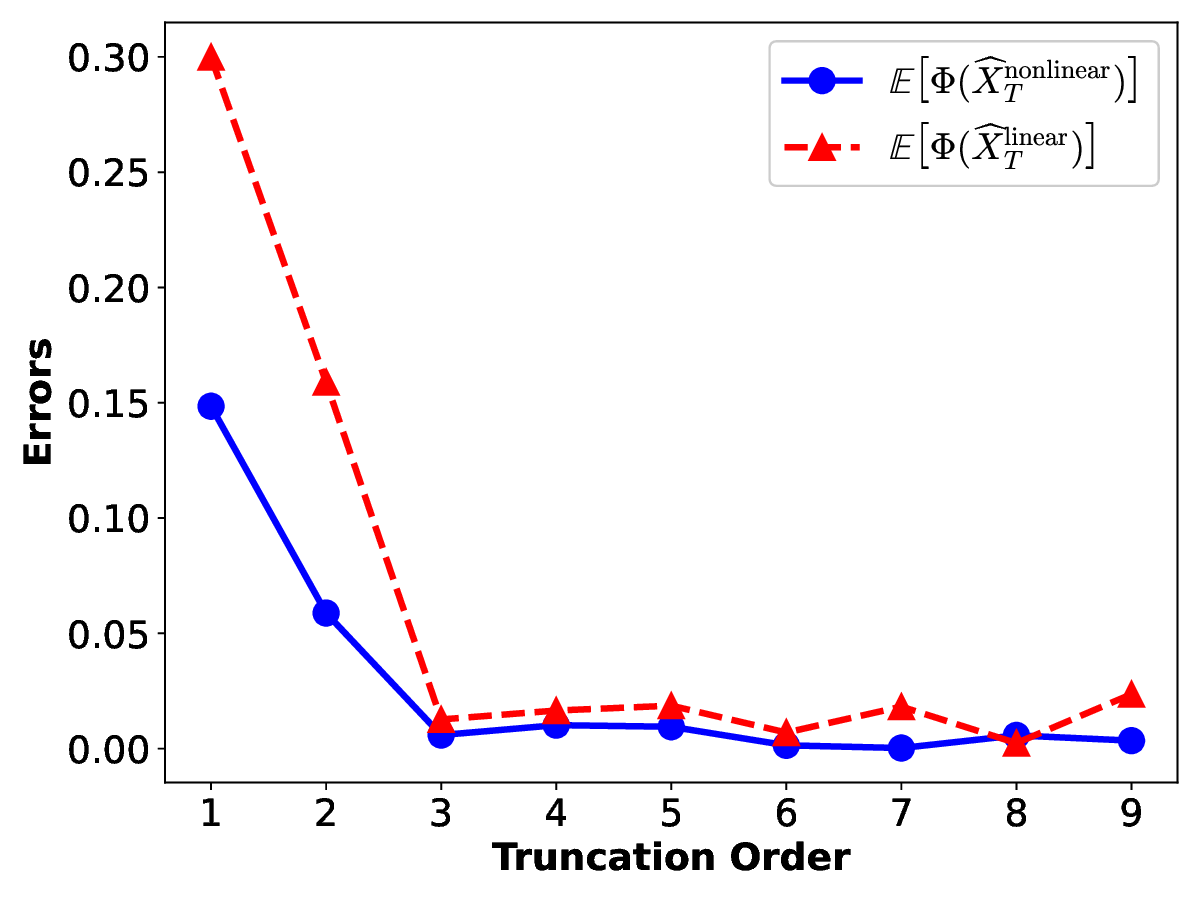}
    			}
    			\caption{Approximation error obtained from SDE and PDE for European put option prices under rBergomi volatility processes at different moneyness levels. The left column is for OTM ($x_0=115>K$), the middle column is for ATM ($x_0=110=K$), and the right column is for ITM ($x_0=105<K$).}
    			\label{fig-example2-option-price-rBergomi}
    		\end{figure}
            
    \section{Conclusions}
	\label{sec-conclusions}
	
	This paper proposes a deep signature method to solve the European option pricing problem within the framework of general non-Markovian stochastic volatility models. To address the non-Markovian nature of the volatility process, we first reformulate the SDE to an RSDE through Lyons' lift. Using the signature representation, we then transform the RSDE into a classical SDE. 
    \cpurple{By characterizing the value function as a random field, we derive the corresponding PDE for option pricing via the It\^o-Wentzell formula and the classical Feynman-Kac theorem. Subsequently, we employ the finite difference methods to compute the European option prices. }
    For linear Markovian models, such as OU and mGBM, we can obtain the explicit forms of the signature representations. For nonlinear path-dependent models, such as rHeston and rBergomi, we propose two approaches, deep linear signature and deep nonlinear signature methods, to study the non-Markovian volatility. Numerical experiments demonstrate that both methods converge rapidly with increasing signature truncation level, with the deep nonlinear signature method achieving superior accuracy. Furthermore, we provide a comprehensive convergence analysis that incorporates truncation errors of the signature, approximation errors of the signature representation, and the overall pricing error, thereby ensuring the robustness of our framework. Due to the powerful learning capacity of neural networks, our deep signature method is adaptable to more complex scenarios and opens avenues for future research in option pricing under non-Markovian stochastic volatility models.

\appendix
\enlargethispage{\baselineskip}
\section{Proofs}
    \subsection{Proof of Proposition \ref{proposition-X-gap}}\label{appendA2}
    	\begin{proof}
    		{\color{purple}
    		We first define the stopping times $\tau_m:=\inf\big\{s\in[0,t]:(|X_s|,|X_s^N|)\notin[2^{-m}, 2^m]^2\big\}$ and $\tau_k:=\inf\Big\{s\in[0,t]:\max\limits_{\bs{q}_s}\sum\limits_{n=0}^{\infty}|\langle\bs{q}_s^{(n)},\widehat{\mathbb{W}}_s^{(n)}\rangle|\geq k\text{ for }\bs{q}_s\in\{\bs{\ell}_s,\dot{\bs{p}}_s,\mathcal{D}_1\bs{p}_s,\mathcal{D}_2\bs{p}_s,\mathcal{D}_{22}^2\bs{p}_s\}\Big\}$ for $m,k\in\mathbb{N}^+$ and $t\in[0,T]$, where $\inf\{\emptyset\}=\infty$. 
            Note that the SABR model loses Lipschitz continuity at $0$ due to the unbounded derivative of the fractional-power terms in $f$ and $g$. 
    		By the definition of $\tau_m$, it is clear that $f$ and $g$ are Lipschitz continuous and satisfy the linear growth condition on $[0,t\land\tau_m]$.
    		We further define the joint stopping time $\tau_{m,k}:=\tau_m\land\tau_k$ and obtain the following estimate
    		\begin{flalign}
    			\label{eq-gap-X}
    			&\mathbb{E}\big|X_{t\land\tau_{m,k}}-X_{t\land\tau_{m,k}}^N\big|^2\\
    			\leq&3\mathbb{E}\Big[\Big|\int_0^{t\land\tau_{m,k}}\big(\langle\bs{q}^f_s,\widehat{\mathbb{W}}_s^{\infty}\rangle-\langle\pi_N(\bs{q}^f_s),\widehat{\mathbb{W}}_s^N\rangle\big)ds\Big|^2\nno\\
    			&+\Big|\int_0^{t\land\tau_{m,k}}\big(\langle f(s,X_s)\mathcal{D}_2\bs{p}_s,\widehat{\mathbb{W}}_s^{\infty}\rangle-\langle f(s,X_s^N)\pi_N(\mathcal{D}_2\bs{p}_s),\widehat{\mathbb{W}}_s^N\rangle\big)dW_s\Big|^2\nno\\
    			&+\Big|\int_0^{t\land\tau_{m,k}}\big(\langle g(s,X_s)\bs{\ell}_s,\widehat{\mathbb{W}}_s^{\infty}\rangle-\langle g(s,X_s^N)\pi_N(\bs{\ell}_s),\widehat{\mathbb{W}}_s^N\rangle\big)dB_s\Big|^2\Big]\nno\\
    			\leq&3\mathbb{E}\Big[T\int_0^{t\land\tau_{m,k}}\big|\langle\bs{q}^f_s,\widehat{\mathbb{W}}_s^{\infty}\rangle-\langle\pi_N(\bs{q}^f_s),\widehat{\mathbb{W}}_s^N\rangle\big|^2ds\nno\\
    			&+\int_0^{t\land\tau_{m,k}}\big|\langle f(s,X_s)\mathcal{D}_2\bs{p}_s,\widehat{\mathbb{W}}_s^{\infty}\rangle-\langle f(s,X_s^N)\pi_N(\mathcal{D}_2\bs{p}_s),\widehat{\mathbb{W}}_s^N\rangle\big|^2ds\nno\\
    			&+\int_0^{t\land\tau_{m,k}}\big|\langle g(s,X_s)\bs{\ell}_s,\widehat{\mathbb{W}}_s^{\infty}\rangle-\langle g(s,X_s^N)\pi_N(\bs{\ell}_s),\widehat{\mathbb{W}}_s^N\rangle\big|^2ds\Big],\nno
    		\end{flalign}
    		where the second inequality uses the It\^o isometry and Cauchy-Schwarz inequality.
    		This decomposition allows us to analyze the gap between $X_t$ and $X_t^N$ separately.
    		
    		When $s\in[0,t\land\tau_{m,k}]$, for a given path in the complete filtered probability space $(\Omega,\mathcal{F}_t,P)$, we have
    		\begin{flalign}
    			\label{eq-gap-qf}
    			&\big|\langle\bs{q}^f_s,\widehat{\mathbb{W}}_s^{\infty}\rangle-\langle\pi_N(\bs{q}^f_s),\widehat{\mathbb{W}}_s^N\rangle\big|^2\\
    			=&\big|f(s,X_s)\langle \dot{\bs{p}}_s+\mathcal{D}_1\bs{p}_s+\frac{1}{2}\mathcal{D}_{22}^2\bs{p}_s,\widehat{\mathbb{W}}_s^{\infty}\rangle\nno\\
    			&-f(s,X_s^N)\langle \pi_N(\dot{\bs{p}}_s)+\pi_N(\mathcal{D}_1\bs{p}_s)+\frac{1}{2}\pi_N(\mathcal{D}_{22}^2\bs{p}_s),\widehat{\mathbb{W}}_s^N\rangle\big|^2\nno\\
    			\leq&3\Big[\big|f(s,X_s)\langle\dot{\bs{p}}_s,\widehat{\mathbb{W}}_s^{\infty}\rangle-f(s,X_s^N)\langle\pi_N(\dot{\bs{p}}_s),\widehat{\mathbb{W}}_s^N\rangle\big|^2\nno\\
    			&+\big|f(s,X_s)\langle\mathcal{D}_1\bs{p}_s,\widehat{\mathbb{W}}_s^{\infty}\rangle-f(s,X_s^N)\langle\pi_N(\mathcal{D}_1\bs{p}_s),\widehat{\mathbb{W}}_s^N\rangle\big|^2\nno\\
    			&+\frac{1}{4}\big|f(s,X_s)\langle\mathcal{D}_{22}^2\bs{p}_s,\widehat{\mathbb{W}}_s^{\infty}\rangle-f(s,X_s^N)\langle\pi_N(\mathcal{D}_{22}^2\bs{p}_s),\widehat{\mathbb{W}}_s^N\rangle\big|^2\Big].\nno
    		\end{flalign}
    		The first part of \eqref{eq-gap-qf} can be treated as follows
    		\begin{align}
    			\label{eq-gap-qf-1}
    			&\big|f(s,X_s)\langle\dot{\bs{p}}_s,\widehat{\mathbb{W}}_s^{\infty}\rangle-f(s,X_s^N)\langle\pi_N(\dot{\bs{p}}_s),\widehat{\mathbb{W}}_s^N\rangle\big|^2\\
    			\leq&2\Big[\big|f(s,X_s)\langle\dot{\bs{p}}_s,\widehat{\mathbb{W}}_s^{\infty}\rangle-f(s,X_s^N)\langle\dot{\bs{p}}_s,\widehat{\mathbb{W}}_s^{\infty}\rangle\big|^2\nno\\
    			+&\big|f(s,X_s^N)\langle\dot{\bs{p}}_s,\widehat{\mathbb{W}}_s^{\infty}\rangle-f(s,X_s^N)\langle\pi_N(\dot{\bs{p}}_s),\widehat{\mathbb{W}}_s^N\rangle\big|^2\Big]\nno\\
    			\leq&2\big(L^2_f\big|X_s-X_s^N\big|^2\langle\dot{\bs{p}}_s\shuffle\dot{\bs{p}}_s,\widehat{\mathbb{W}}_s^{\infty}\rangle+f^2(s,X_s^N)G_s^{\dot{\bs{p}},N,\omega}\big).\nno
    		\end{align}
    		where $G_s^{\dot{\bs{p}},N,\omega}:=\big|\langle\dot{\bs{p}}_s,\widehat{\mathbb{W}}_s^{\infty}\rangle-\langle\pi_N(\dot{\bs{p}}_s),\widehat{\mathbb{W}}_s^N\rangle\big|^2$ and $L_f\geq0$ is the Lipschitz constant for $f$.
    		The rest two parts of~\eqref{eq-gap-qf} can be treated similarly as follows
    		\begin{align}
    			\label{eq-gap-qf-2}
    			&\big|f(s,X_s)\langle\mathcal{D}_1\bs{p}_s,\widehat{\mathbb{W}}_s^{\infty}\rangle-f(s,X_s^N)\langle\pi_N(\mathcal{D}_1\bs{p}_s),\widehat{\mathbb{W}}_s^N\rangle\big|^2\\
    			\leq&2\big(L^2_f\big|X_s-X_s^N\big|^2\langle\mathcal{D}_1\bs{p}_s\shuffle\mathcal{D}_1\bs{p}_s,\widehat{\mathbb{W}}_s^{\infty}\rangle+f^2(s,X_s^N)G_s^{\mathcal{D}_1\bs{p},N,\omega}\big),\nno
    		\end{align}
    		\begin{align}
    			\label{eq-gap-qf-3}
    			&\big|f(s,X_s)\langle\mathcal{D}_{22}^2\bs{p}_s,\widehat{\mathbb{W}}_s^{\infty}\rangle-f(s,X_s^N)\langle\pi_N(\mathcal{D}_{22}^2\bs{p}_s),\widehat{\mathbb{W}}_s^N\rangle\big|^2\\
    			\leq&2\big(L^2_f\big|X_s-X_s^N\big|^2\langle\mathcal{D}_{22}^2\bs{p}_s\shuffle\mathcal{D}_{22}^2\bs{p}_s,\widehat{\mathbb{W}}_s^{\infty}\rangle+f^2(s,X_s^N)G_s^{\mathcal{D}_{22}^2\bs{p},N,\omega}\big),\nno
    		\end{align}
    		where
    		$G_s^{\mathcal{D}_1\bs{p},N,\omega}:=\big|\langle\mathcal{D}_1\bs{p}_s,\widehat{\mathbb{W}}_s^{\infty}\rangle-\langle\pi_N(\mathcal{D}_1\bs{p}_s),\widehat{\mathbb{W}}_s^N\rangle\big|^2$ and 
    		$G_s^{\mathcal{D}_{22}^2\bs{p},N,\omega}:=\big|\langle\mathcal{D}_{22}^2\bs{p}_s,\widehat{\mathbb{W}}_s^{\infty}\rangle-\langle\pi_N(\mathcal{D}_{22}^2\bs{p}_s),\widehat{\mathbb{W}}_s^N\rangle\big|^2$. 
    		Combing \eqref{eq-gap-qf}-\eqref{eq-gap-qf-3}, we have
    		\begin{align}
    			\label{eq-gap-qf-final}
    			&\big|\langle\bs{q}^f_s,\widehat{\mathbb{W}}_s^{\infty}\rangle-\langle\pi_N(\bs{q}^f_s),\widehat{\mathbb{W}}_s^N\rangle\big|^2\\
    			\leq&6L^2_f\big|X_s-X_s^N\big|^2\langle\dot{\bs{p}}_s\shuffle\dot{\bs{p}}_s+\mathcal{D}_1\bs{p}_s\shuffle\mathcal{D}_1\bs{p}_s+\frac{1}{4}\mathcal{D}_{22}^2\bs{p}_s\shuffle\mathcal{D}_{22}^2\bs{p}_s,\widehat{\mathbb{W}}_s^{\infty}\rangle\nno\\
    			&+6f^2(s,X_s^N)(G_s^{\dot{\bs{p}},N,\omega}+G_s^{\mathcal{D}_1\bs{p},N,\omega}+\frac{1}{4}G_s^{\mathcal{D}_{22}^2\bs{p},N,\omega}).\nno
    		\end{align}
    		
    		Similarly, the second and third parts of~\eqref{eq-gap-X} can be treated as follows
    		\begin{flalign}
    			\label{eq-gap-X-2}
    			&\big|\langle f(s,X_s)\mathcal{D}_2\bs{p}_s,\widehat{\mathbb{W}}_s^{\infty}\rangle-\langle f(s,X_s^N)\pi_N(\mathcal{D}_2\bs{p}_s),\widehat{\mathbb{W}}_s^N\rangle\big|^2\\
    			\leq&2\big(L^2_f\big|X_s-X_s^N\big|^2\langle\mathcal{D}_2\bs{p}_s\shuffle\mathcal{D}_2\bs{p}_s,\widehat{\mathbb{W}}_s^{\infty}\rangle+f^2(s,X_s^N)G_s^{\mathcal{D}_2\bs{p},N,\omega}\big),\nno
    		\end{flalign}
    		\begin{flalign}
    			\label{eq-gap-X-3}
    			&\big|\langle g(s,X_s)\bs{\ell}_s,\widehat{\mathbb{W}}_s^{\infty}\rangle-\langle g(s,X_s^N)\pi_N(\bs{\ell}_s),\widehat{\mathbb{W}}_s^N\rangle\big|^2\\
    			\leq&2\big(L^2_g\big|X_s-X_s^N\big|^2\langle\bs{\ell}_s\shuffle\bs{\ell}_s,\widehat{\mathbb{W}}_s^{\infty}\rangle+g^2(s,X_s^N)G_s^{\bs{\ell},N,\omega}\big),\nno
    		\end{flalign}
    		where $G_s^{\bs{\ell},N,\omega}:=\big|\langle\bs{\ell}_s,\widehat{\mathbb{W}}_s^{\infty}\rangle-\langle\pi_N(\bs{\ell}_s),\widehat{\mathbb{W}}_s^N\rangle\big|^2$, 
    		$G_s^{\mathcal{D}_2\bs{p},N,\omega}:=\big|\langle\mathcal{D}_2\bs{p}_s,\widehat{\mathbb{W}}_s^{\infty}\rangle-\langle\pi_N(\mathcal{D}_2\bs{p}_s),\widehat{\mathbb{W}}_s^N\rangle\big|^2$, and $L_g\geq0$ is the Lipschitz constant for $g$.

    		Combining \eqref{eq-gap-X}, \eqref{eq-gap-qf-final}, \eqref{eq-gap-X-2}, and \eqref{eq-gap-X-3}, we obtain
    		\begin{flalign*}
    			&\mathbb{E}\big|X_{t\land\tau_{m,k}}-X_{t\land\tau_{m,k}}^N\big|^2\\
				\leq&3T\mathbb{E}\Big\{\int_0^{t\land\tau_{m,k}}\Big[\big|X_s-X_s^N\big|^2\big(2L_g^2\langle\bs{\ell}_s\shuffle\bs{\ell}_s,\widehat{\mathbb{W}}_s^{\infty}\rangle\\
				&+6L_f^2\langle\dot{\bs{p}}_s\shuffle\dot{\bs{p}}_s+\mathcal{D}_1\bs{p}_s\shuffle\mathcal{D}_1\bs{p}_s+\frac{1}{4}\mathcal{D}_{22}^2\bs{p}_s\shuffle\mathcal{D}_{22}^2\bs{p}_s+\frac{1}{3}\mathcal{D}_2\bs{p}_s\shuffle\mathcal{D}_2\bs{p}_s,\widehat{\mathbb{W}}_s^{\infty}\rangle\big)\\
				&+6f^2(s,X_s^N)\big(G_s^{\dot{\bs{p}},N,\omega}+G_s^{\mathcal{D}_1\bs{p},N,\omega}+\frac{1}{4}G_s^{\mathcal{D}_{22}^2\bs{p},N,\omega}+\frac{1}{3}G_s^{\mathcal{D}_2\bs{p},N,\omega}\big)+2g^2(s,X_s^N)G_s^{\bs{\ell},N,\omega}\Big]ds\Big\}\\
				=&\mathbb{E}\big[F_1(t\land\tau_{m,k},\omega)\big]+\mathbb{E}\Big[\int_0^{t\land\tau_{m,k}}F_2(s,\omega)\big|X_s-X_s^N\big|^2ds\Big],
    		\end{flalign*}
    		where
    		\begin{flalign*}
    			F_1(t\land\tau_{m,k},\omega)
    			:=&3T\int_0^{t\land\tau_{m,k}}\Big[6f^2(s,X_s^N)\big(G_s^{\dot{\bs{p}},N,\omega}+G_s^{\mathcal{D}_1\bs{p},N,\omega}+\frac{1}{4}G_s^{\mathcal{D}_{22}^2\bs{p},N,\omega}+\frac{1}{3}G_s^{\mathcal{D}_2\bs{p},N,\omega}\big)\\
    			&+2g^2(s,X_s^N)G_s^{\bs{\ell},N,\omega}\Big]ds,
    		\end{flalign*}
    		and
    		\begin{flalign*}
    			F_2(s,\omega)
    			:=&3T\big(2L_g^2\langle\bs{\ell}_s\shuffle\bs{\ell}_s,\widehat{\mathbb{W}}_s^{\infty}\rangle\\
    			&+6L_f^2\langle\dot{\bs{p}}_s\shuffle\dot{\bs{p}}_s+\mathcal{D}_1\bs{p}_s\shuffle\mathcal{D}_1\bs{p}_s+\frac{1}{4}\mathcal{D}_{22}^2\bs{p}_s\shuffle\mathcal{D}_{22}^2\bs{p}_s+\frac{1}{3}\mathcal{D}_2\bs{p}_s\shuffle\mathcal{D}_2\bs{p}_s,\widehat{\mathbb{W}}_s^{\infty}\rangle\big),
    		\end{flalign*}
    		for $s\in[0,t\land\tau_{m,k}]$.

    		Given the linear growth conditions of $f$ and $g$, along with the definition of $\tau_m$, it follows that $f^2(s,X_s^N)$ and $g^2(s,X_s^N)$ are bounded and finitely integrable on $s\in[0,t\land\tau_m]$. 
    		Theorem~\ref{theorem-linear-rep-gap} implies that the expectations of $G_s^{\dot{\bs{p}},N,\omega},G_s^{\mathcal{D}_1\bs{p},N,\omega},G_s^{\mathcal{D}_{22}^2\bs{p},N,\omega},G_s^{\mathcal{D}_2\bs{p},N,\omega},G_s^{\bs{\ell},N,\omega}$ tend to zero as $N\to\infty$. 
    		Furthermore, by the definition of $\tau_k$, the linear combinations of signatures in $F_2(s,\omega)$ are bounded on $s\in[0,t\land\tau_k]$. 
    		These suggest that there exist nonnegative constants $C_{m,k}^N$ and $C_k$, which depend on $m$, $k$ and the truncation order $N$, such that
    		\begin{equation*}
    			\mathbb{E}\big[F_1(t\land\tau_{m,k},\omega)\big]=C_{m,k}^N,\quad F_2(s,\omega)\leq C_k,\, s\in[0,{t\land\tau_{m,k}}].
    		\end{equation*}
    		Therefore, using Gronwall's inequality yields
    		\begin{flalign*}
    			\mathbb{E}\big|X_{t\land\tau_{m,k}}-X_{t\land\tau_{m,k}}^N\big|^2
    			\leq&C_{m,k}^N + \mathbb{E}\Big[\int_0^{t\land\tau_{m,k}} C_k \big|X_s-X_s^N\big|^2 ds\Big]\\
    			=&C_{m,k}^N + C_k\int_0^t\mathbb{E}\Big[\big|X_s-X_s^N\big|^2\mathbf{1}_{\{s\leq \tau_{m,k}\}}\Big]ds\\
    			\leq&C_{m,k}^N+C_k\int_0^t\mathbb{E}\big|X_{s\land\tau_{m,k}}-X_{s\land\tau_{m,k}}^N\big|^2ds\\
    			\leq&C_{m,k}^N\exp(tC_k).
    		\end{flalign*}

            In the following proof, we regard the squared error $|X_t-X^N_t|^2$ as a sequence of random variables with respect to $N$ and prove it converges to 0 in the $L_1$ sense for any $t\in[0, T]$ by the Vitali convergence theorem. First, we prove $|X_t-X_t^N|^2$ converges to $0$ in probability for any $t\in[0, T]$. Specifically, for an arbitrary tolerance $\epsilon_P>0$, we have
			\begin{align*}
				P\big(\big|X_t-X_t^N\big|^2>\epsilon_P\big)
				=&P\big(\big\{\big|X_t-X_t^N\big|^2>\epsilon_P\big\} \cap \big\{\tau_{m,k}>t\big\}\big)\\
                &+P\big(\big\{\big|X_t-X_t^N\big|^2>\epsilon_P\big\}\cap\big\{\tau_{m,k}\leq t\big\}\big)\\
				\leq&P\big(\big|X_{t\land\tau_{m,k}}-X_{t\land\tau_{m,k}}^N\big|^2>\epsilon_P\big)+P(\tau_m\leq t)+P(\tau_k\leq t)\\
				\leq&\frac{1}{\epsilon_P}C_{m,k}^N\exp(tC_k)+P(\tau_m\leq T)+P(\tau_k\leq T),
			\end{align*}
			where the second inequality follows from Markov's inequality. 
			Given any arbitrarily small $\eta_P>0$, we can choose sufficiently large $m=m^*$ and $k=k^*$ such that
			\begin{equation*}
				P(\tau_{m^*}\leq T)+P(\tau_{k^*}\leq T)<\frac{\eta_P}{2}.
			\end{equation*}
			Since $\lim\limits_{N\to\infty}C_{m^*,k^*}^N=0$ and $C_{k^*}$ is constant for the fixed $k^*$, there exists a sufficiently large $N^*$ such that
			\begin{equation*}
				\frac{1}{\epsilon_P}C_{m^*,k^*}^N\exp(tC_{k^*})<\frac{\eta_P}{2},
			\end{equation*}
			for all $N\geq N^*$. 
			Consequently, for all $N\geq N^*$, we have $P\big(\big|X_t-X_t^N\big|^2>\epsilon_P\big)<\eta_P$, which implies $|X_t-X_t^N|^2\xrightarrow{P}0$ as $N\to\infty$ for all $t\in[0, T]$. 
			
			Moreover, the global linear growth conditions on $f$ and $g$, together with the definition of the admissible set $\mathcal{A}(\widehat{\mathbb{W}}^N)$, ensure that the fourth moment of $X_t^N$ is uniformly bounded with respect to $N$. 
			That is, $\sup\limits_{N}\mathbb{E}|X_t^N|^4<\infty$ and $\mathbb{E}|X_t|^4<\infty$. 
			Consequently, the fourth moment of the error also admits a uniform bound, i.e., $\sup\limits_N \mathbb{E}|X_t-X_t^N|^4<\infty$. 
			By the de la Vall\'ee Poussin theorem~\cite{dellacherie2011probabilities}, the family $\{|X_t-X_t^N|^2\}_{N\in\mathbb{N}}$ is uniformly integrable for any $t\in[0,T]$. 
			
			Finally, an application of the Vitali convergence theorem~\cite{williams1991probability} yields
			\begin{equation*}
				\lim\limits_{N\to\infty}\mathbb{E}\big|X_t-X_t^N\big|^2=0,\quad t\in[0,T]. 
			\end{equation*} 
    		}
    	\end{proof}

    \subsection{Proof of Theorem \ref{theorem-linear-option-price-gap}}\label{appendA3}

		\begin{proof}
			This proof is given by an application of Proposition \ref{proposition-X-gap}.
			\cpurple{
			\begin{flalign*}
				&\mathbb{E}\big|u(0,s_0)-u^N(0,s_0)\big|^2\leq\mathbb{E}\big|\Phi(X_T)-\Phi(X_T^N)\big|^2\leq L_{\Phi}^2\mathbb{E}\big|X_T-X_T^N\big|^2\leq L_{\Phi}^2\epsilon_T^N,
			\end{flalign*}
			}
			where $\epsilon_T^N$ comes from Proposition~\ref{proposition-X-gap}.
			In particular, as $N\to\infty$, $\epsilon_T^N\to0$ with bounded $L_{\Phi}$, which implies that the upper bound tends to zero.
		\end{proof}

    \subsection{Proof of Theorem \ref{theorem-nonlinear-rep}}\label{appendA4}
		\begin{proof}
			We first decompose the error as follows:
			\begin{flalign}
				&\mathbb{E}\big|\mathbf{v}_t-\mathcal{N}^{\mathbf{v}}_t(\widehat{\mathbb{W}}_t^N;\phi_{\mathbf{v}})\big|^2\nno\\ 
				\leq&3\left(\mathbb{E}\big|\mathbf{v}_t-\mathcal{L}(\widehat{\mathbb{W}}_t^{\infty})\big|^2 +\mathbb{E}\big|\mathcal{L}(\widehat{\mathbb{W}}_t^{\infty}) -\mathcal{L}(\widehat{\mathbb{W}}_t^N)\big|^2+\mathbb{E}\big|\mathcal{L} (\widehat{\mathbb{W}}_t^N)-\mathcal{N}^{\mathbf{v}}_t(\widehat{\mathbb{W}}_t^N;\phi_{\mathbf{v}})\big|^2\right),
				\label{eq-gap-nn}
			\end{flalign}
			where $\mathcal{L}$ denotes a linear functional. The above inequality comes from the inequality of arithmetic mean and quadratic mean, i.e., $(x_1+x_2+x_3)^2\leq 3 (x^2_1+x^2_2+x^2_3)$. According to the universal nonlinearity proposition (see, e.g.,~\cite{kidger2019deep}, Proposition A.6), for $(\mathcal{F}_t^W)$-progressive and continuous stochastic process $v$ and any $\epsilon^{\mathcal{N},\mathbf{v},1}_t>0$, there exists a linear functional $\mathcal{L}$ depending on a compact set, such that
			\begin{equation*}
				3\mathbb{E}\big|\mathbf{v}_t-\mathcal{L}(\widehat{\mathbb{W}}_t^{\infty})\big|^2\leq\epsilon^{\mathcal{N},\mathbf{v},1}_t.
			\end{equation*}
			The second term of~\eqref{eq-gap-nn} can be estimated by Theorem~\ref{theorem-linear-rep-gap} as follows
			\begin{equation*}
				3\mathbb{E}\big|\mathcal{L}(\widehat{\mathbb{W}}_t^{\infty})-\mathcal{L}(\widehat{\mathbb{W}}_t^N)\big|^2\leq \epsilon^{\mathbf{v},\widehat{\mathbb{W}},N}_t.
			\end{equation*}
			Next, applying the universality property (see, e.g.,~\cite{funahashi1993approximation}), for the linear functional and any $\epsilon^{\mathcal{N},\mathbf{v},2}_t>0$, there exists a neural network $\mathcal{N}^{\mathbf{v}}_t(\widehat{\mathbb{W}}_t^N;\phi_{\mathbf{v}})$, such that
			\begin{equation*}
				3\mathbb{E}\big|\mathcal{L}(\widehat{\mathbb{W}}_t^N)-\mathcal{N}^{\mathbf{v}}_t(\widehat{\mathbb{W}}_t^N;\phi_{\mathbf{v}})\big|^2\leq\epsilon^{\mathcal{N},\mathbf{v},2}_t.
			\end{equation*}
			Combining the above estimates, for a fixed neural network at time $t$, we conclude that
			\begin{equation*}
				\mathcal{E}^{\mathcal{N},\mathbf{v}}_t\leq\epsilon^{\mathcal{N},\mathbf{v},1}_t+\epsilon^{\mathcal{N},\mathbf{v},2}_t+\epsilon^{\mathbf{v},\widehat{\mathbb{W}},N}_t:=\epsilon^{\mathcal{N},\mathbf{v}}_t+\epsilon^{\mathbf{v},\widehat{\mathbb{W}},N}_t.
			\end{equation*}
		\end{proof}

\bibliographystyle{plain}
\bibliography{references}

@article{abi2025signature,
  title={Signature volatility models: Pricing and hedging with {F}ourier},
  author={Abi Jaber, Eduardo and G{\'e}rard, Louis-Amand},
  journal={SIAM J. Financial Math},
  volume={16},
  number={2},
  pages={606--642},
  year={2025},
  publisher={SIAM}
}

@article{jaber2024path,
  title={Path-dependent processes from signatures},
  author={Jaber, Eduardo Abi and G{\'e}rard, Louis-Amand and Huang, Yuxing},
  journal={arXiv preprint arXiv:2407.04956},
  year={2024}
}

@article{bank2025rough,
  title={Rough {PDE}s for local stochastic volatility models},
  author={Bank, Peter and Bayer, Christian and Friz, Peter K and Pelizzari, Luca},
  journal={Math. Finance},
  volume={35},
  number={3},
  pages={661--681},
  year={2025},
  publisher={Wiley Online Library}
}

@article{bayer2016pricing,
  title={Pricing under rough volatility},
  author={Bayer, Christian and Friz, Peter and Gatheral, Jim},
  journal={Quant. Finance},
  volume={16},
  number={6},
  pages={887--904},
  year={2016},
  publisher={Taylor \& Francis}
}

@article{bayer2025pricing,
  title={Pricing {A}merican options under rough volatility using deep-signatures and signature-kernels},
  author={Bayer, Christian and Pelizzari, Luca and Zhu, Jia-Jie},
  journal={arXiv preprint arXiv:2501.06758},
  year={2025}
}

@article{bayraktar2024deep,
  title={Deep signature algorithm for multidimensional path-dependent options},
  author={Bayraktar, Erhan and Feng, Qi and Zhang, Zhaoyu},
  journal={SIAM J. Financial Math},
  volume={15},
  number={1},
  pages={194--214},
  year={2024},
  publisher={SIAM}
}

@article{bugini2024malliavin,
  title={Malliavin calculus for rough stochastic differential equations},
  author={Bugini, Fabio and Coghi, Michele and Nilssen, Torstein},
  journal={arXiv preprint arXiv:2402.12056},
  year={2024}
}

@article{bugini2024parameter,
  title={Parameter dependent rough SDEs with applications to rough PDEs},
  author={Bugini, Fabio and Friz, Peter K and Stannat, Wilhelm},
  journal={arXiv preprint arXiv:2409.11330},
  year={2024}
}

@article{chen1957integration,
  title={Integration of paths, geometric invariants and a generalized {B}aker-{H}ausdorff formula},
  author={Chen, Kuo Tsai},
  journal={Ann. of Math.},
  volume={65},
  number={1},
  pages={163--178},
  year={1957},
  publisher={JSTOR}
}

@article{cuchiero2023signature,
  title={Signature-based models: {T}heory and calibration},
  author={Cuchiero, Christa and Gazzani, Guido and Svaluto-Ferro, Sara},
  journal={SIAM J. Financial Math.},
  volume={14},
  number={3},
  pages={910--957},
  year={2023},
  publisher={SIAM}
}

@article{diehl2017stochastic,
  title={Stochastic partial differential equations: a rough paths view on weak solutions via {F}eynman--{K}ac},
  author={Diehl, Joscha and Friz, Peter K and Stannat, Wilhelm},
  journal={Annales de la Facult{\'e} des sciences de Toulouse: Math{\'e}matiques},
  volume={26},
  number={4},
  pages={911--947},
  year={2017}
}

@article{el2019characteristic,
  title={The characteristic function of rough {H}eston models},
  author={El Euch, Omar and Rosenbaum, Mathieu},
  journal={Math. Finance},
  volume={29},
  number={1},
  pages={3--38},
  year={2019},
  publisher={Wiley Online Library}
}

@article{friz2021rough,
  title={Rough stochastic differential equations},
  author={Friz, Peter K and Hocquet, Antoine and L{\^e}, Khoa},
  journal={arXiv preprint arXiv:2106.10340},
  year={2021}
}

@article{friz2024controlled,
  title={Controlled rough SDEs, pathwise stochastic control and dynamic programming principles},
  author={Friz, Peter K and L{\^e}, Khoa and Zhang, Huilin},
  journal={arXiv preprint arXiv:2412.05698},
  year={2024}
}

@article{funahashi1993approximation,
  title={Approximation of dynamical systems by continuous time recurrent neural networks},
  author={Funahashi, Ken-ichi and Nakamura, Yuichi},
  journal={Neural networks},
  volume={6},
  number={6},
  pages={801--806},
  year={1993},
  publisher={Elsevier}
}

@article{kidger2019deep,
  title={Deep signature transforms},
  author={Kidger, Patrick and Bonnier, Patric and Perez Arribas, Imanol and Salvi, Cristopher and Lyons, Terry},
  journal={Advances in neural information processing systems},
  volume={32},
  year={2019}
}

@article{li2024reflected,
  title={Reflected backward stochastic differential equations with rough drivers},
  author={Li, Hanwu and Zhang, Huilin and Zhang, Kuan},
  journal={arXiv preprint arXiv:2407.17898},
  year={2024}
}

@article{lyons1998differential,
  title={Differential equations driven by rough signals},
  author={Lyons, Terry J},
  journal={Revista Matem{\'a}tica Iberoamericana},
  volume={14},
  number={2},
  pages={215--310},
  year={1998}
}

@book{lyons2007differential,
  title={Differential equations driven by rough paths: Ecole d'Et{\'e} de Probabilit{\'e}s de Saint-Flour XXXIV-2004},
  author={Lyons, Terry J and Caruana, Michael and L{\'e}vy, Thierry},
  year={2007},
  publisher={Springer},
  address={Berlin, Heidelberg}
}

@inproceedings{tong2023sigformer,
  title={Sigformer: {S}ignature transformers for deep hedging},
  author={Tong, Anh and Nguyen-Tang, Thanh and Lee, Dongeun and Tran, Toan M and Choi, Jaesik},
  booktitle={Proceedings of the Fourth ACM International Conference on AI in Finance},
  pages={124--132},
  year={2023}
}

@article{uhlenbeck1930theory,
  title={On the theory of the {B}rownian motion},
  author={Uhlenbeck, George E and Ornstein, Leonard S},
  journal={Physical review},
  volume={36},
  number={5},
  pages={823},
  year={1930},
  publisher={APS}
}

@article{jaber2025hedging,
  title={Hedging with memory: shallow and deep learning with signatures},
  author={Jaber, Eduardo Abi and G{\'e}rard, Louis-Amand},
  journal={arXiv preprint arXiv:2508.02759},
  year={2025}
}

@article{lemercier2024log,
  title={Log-pde methods for rough signature kernels},
  author={Lemercier, Maud and Lyons, Terry and Salvi, Cristopher},
  journal={arXiv preprint arXiv:2404.02926},
  year={2024}
}

@book{dellacherie2011probabilities,
  title={Probabilities and potential, c: potential theory for discrete and continuous semigroups},
  author={Dellacherie, Claude and Meyer, P-A},
  volume={151},
  year={2011},
  publisher={Elsevier}
}

@book{williams1991probability,
  title={Probability with martingales},
  author={Williams, David},
  year={1991},
  publisher={Cambridge university press}
}

@article{kluppelberg2004continuous,
  title={A continuous-time GARCH process driven by a L{\'e}vy process: stationarity and second-order behaviour},
  author={Kl{\"u}ppelberg, Claudia and Lindner, Alexander and Maller, Ross},
  journal={Journal of Applied Probability},
  volume={41},
  number={3},
  pages={601--622},
  year={2004},
  publisher={Cambridge University Press}
}

\end{document}